\documentclass[11pt]{article}
\usepackage[margin=1in]{geometry}

\usepackage[dvipsnames]{xcolor}
\usepackage{amsmath}
\usepackage{amsthm}	
\usepackage{amsfonts}
\usepackage{amssymb}
\usepackage{bbm}
\usepackage{mathtools}

\usepackage{booktabs} 
\usepackage{caption} 
\usepackage{subcaption} 
\usepackage{graphicx}
\usepackage{pgfplots}
\usepackage[all]{nowidow}
\usepackage[utf8]{inputenc}
\usepackage{tikz}
\usepackage{multicol}
\usepackage{algpseudocode,algorithm,algorithmicx}
\usepackage{hyperref}
\usepackage{scalerel}
\usepackage{natbib}
\usepackage[english]{babel}
\setcitestyle{numbers}
\usepackage{authblk}

\usepackage{yhmath}
\usepackage{cancel}
\usepackage{siunitx}
\usepackage{array}
\usepackage{multirow}
\usepackage{gensymb}
\usepackage{tabularx}
\usepackage{booktabs}
\usepackage{setspace}
\usepackage[skip=8pt,indent=0pt]{parskip}

\usetikzlibrary{fadings}
\usetikzlibrary{patterns}
\usetikzlibrary{shadows.blur}

\newtheorem{theorem}{Theorem}

\newtheorem{assumption}{Assumption}

\renewenvironment{proof}[1][\proofname]{{\noindent \bfseries\itshape #1.}}{\qed}

\newcommand{\al}{\alpha}
\newcommand{\be}{\beta}
\newcommand{\ga}{\gamma}
\newcommand{\de}{\delta}
\newcommand{\De}{\Delta}

\newcommand{\1}{\mathbbm{1}}
\newcommand{\E}{\mathbb{E}}
\newcommand{\F}{\mathcal{F}}

\newcommand{\tkk}{{t_{k+1}}}
\newcommand{\tk}{{t_{k}}}

\definecolor{Red}{rgb}{1,0,0}

\definecolor{DRed}{rgb}{0,0,0}

\definecolor{Green}{rgb}{0.2,0.5,0.2} 

\definecolor{Blue}{rgb}{0,0,1} 

\definecolor{Orange}{rgb}{0.605,0.311,0.084} 
\definecolor{Black}{rgb}{0,0,0} 
\definecolor{Grey}{rgb}{0.52, 0.52, 0.51} 
\definecolor{Purple}{rgb}{0.333,0.036, 0.631} 
\definecolor{Aqua}{rgb}{0.018,0.592,0.39}

\title{Optimal Order Placement under a State-Dependent Hawkes Process Limit Order Book model}
\author[1]{Jonathan Chávez Casillas\thanks{jchavezc@uri.edu}}
\author[2]{Li Gan\thanks{lgan2@uri.edu}}

\affil[1]{Dept. of Mathematics and Applied Mathematical Sciences, University of Rhode Island}
\affil[2]{Dept. of Mathematics and Applied Mathematical Sciences, University of Rhode Island}

\date{\today}
\pgfplotsset{compat=1.18}

\begin{document}

\maketitle
\begin{abstract}
We study optimal quote placement for a market maker facing order flow modeled by a state-dependent Hawkes process. A block-based linear demand specification links market-order counts to execution volumes, providing the demand moments required by a continuous time discrete state dynamic programming formulation. Under a quadratic terminal inventory penalty, we characterize inventory-dependent quotes and derive backward recursions for the coefficients of a quadratic value function. Numerical experiments using LOBSTER data examine quote distributions, inventory exposure, and terminal wealth. The implemented strategy yields higher aggregate terminal wealth than fixed-spread benchmarks at one and three ticks, but lower wealth than the two-tick benchmark. These results illustrate the framework's numerical implementation and distinguish terminal-wealth comparisons from evaluation under the inventory-penalized objective.
\end{abstract}


\section{Introduction}

A \emph{market maker} (MM) provides liquidity by continuously posting bid
and ask Limit Orders (LOs) in a Limit Order Book (LOB) and profiting from
the resulting bid--ask spread; see the monographs of Cartea et
al.~\cite{cartea2015book} and Gu\'{e}ant~\cite{gueant2016} for
comprehensive surveys. With the proliferation of electronic trading over
the past two decades, market making is increasingly carried out at high
frequency by algorithms that submit and cancel orders at extraordinary
speed while tightly controlling intraday inventory
exposure~\cite{capponi2021market, chavez2024adaptive}. In this context,
the \emph{optimal placement problem}---how far from the mid-price the MM
should post each bid and ask at each moment---has attracted extensive
theoretical and empirical attention.

Following the continuous time discrete state framework of \cite{capponi2021market} and
\cite{chavez2024adaptive}, we consider an intraday MM who simultaneously
quotes bid and ask LOs at pre-specified times
$t_{0} < t_{1} < \cdots < t_{n} = T$ and liquidates the terminal
inventory via a market order at $T$. Denoting the fundamental price by
$S_{T}$, the terminal inventory by $I_{T}$, the terminal wealth from
roundtrip transactions by $W_{T}$, and the price-impact penalty
coefficient by $\lambda > 0$, the MM's objective is to maximise
$W_{T} + I_{T}(S_{T} - \lambda I_{T})$ over all admissible bid--ask
placement strategies. Central to this framework is the specification of
how many shares get filled at the MM's LOs during each interval. The
\emph{demand-function} approach of Hendershott \&
Menkveld~\cite{hendershott2014} and Adrian et al.~\cite{adrian2020}
models the filled volume directly as a (linear) function of the quoted
spread; Capponi et al.~\cite{capponi2021market} and
Ch\'avez-Casillas et al.~\cite{chavez2024adaptive} extend this to
\emph{stochastic} demand with Market Orders(MO) Bernoulli arrival indicators on each side.

While the stochastic-demand model of \cite{capponi2021market,
chavez2024adaptive} yields closed-form optimal strategies, its Bernoulli
arrival component does not capture the \emph{self-} and
\emph{cross-excitation} dynamics of high-frequency order flow that are
extensively documented via multivariate Hawkes
processes~\cite{hawkes1971, bacry2015, bowser2007}. To restore the
state-of-the-book dependence that standard Hawkes intensities lack,
Morariu-Patrichi \& Pakkanen~\cite{morariu2022state} introduce the
\emph{state-dependent Hawkes process} (sdHawkes), in which the
excitation kernel depends on a state variable (e.x. the queue imbalance) that jumps at each event.

The present paper proposes a new framework for optimal placement that
combines the stochastic demand structure of \cite{capponi2021market,
chavez2024adaptive} with the sdHawkes arrival dynamics of
\cite{morariu2022state}. Our two contributions are:

\begin{enumerate}
  \item \textbf{MO sdHawkes arrivals in an optimal-placement DPP.}
        We replace the Bernoulli MO arrival model with a state-dependent
        Hawkes process for market buy and sell orders and derive the
        associated backward Dynamic Programming (DPP) recursions for the
        optimal-spread coefficients under the Cartea-Jaimungal ansatz.
        To the best of our knowledge, this is the first optimal
        market-making paper to embed a state-dependent point process for
        market-order arrivals within a stochastic optimal control
        framework.

  \item \textbf{Parametric MO demand function.}
        For each time interval, we partition the number of arriving MO $N$ into different blocks and
        fit a linear demand function for every block.
\end{enumerate}

\noindent Despite these changes to the arrival and demand model, the
optimal placement strategy retains the same analytical form as
in~\cite{chavez2024adaptive}: the optimal ask and bid spreads
$L^{+,\star}_{t_{k}}$ and $L^{-,\star}_{t_{k}}$ are affine in the current
inventory $I_{t_{k}}$,
\[
  L^{+,\star}_{t_{k}}
      =  {}^{(1)}\!A^{+}_{t_{k}} I_{t_{k}} + {}^{(2)}\!A^{+}_{t_{k}}
        + {}^{(3)}\!A^{+}_{t_{k}}, \qquad
  L^{-,\star}_{t_{k}}
      = -{}^{(1)}\!A^{-}_{t_{k}} I_{t_{k}} - {}^{(2)}\!A^{-}_{t_{k}}
        + {}^{(3)}\!A^{-}_{t_{k}},
\]
with time-varying coefficients satisfying explicit backward recursions
driven by the sdHawkes-derived mixed moments of the demand function. We
validate the model on five days of Mircosoft(NASDAQ: MSFT) LOBSTER data and show that the resulting strategy outperforms fixed-spread benchmarks on a risk-adjusted basis.

\subsection*{Relevant Works}
\label{sec:relevant_works}

Optimal market making has a long history. Early single-period
contributions include Ho \& Stoll~\cite{hostoll1981}, who use a
mean--variance utility to trade off spread revenue against inventory
risk, and the repeated-period analyses of Bradfield (1979) and O'Hara \&
Oldfield (1986). The continuous-time approach was advanced by Avellaneda
\& Stoikov~\cite{avellaneda2008}, who use an exponential utility over
terminal wealth and derive asymptotically optimal quotes via the
Hamilton--Jacobi--Bellman equation; Gu\'eant et
al.~\cite{gueant2013} provide the corresponding closed-form solutions
for the optimal bid and ask spreads; and Cartea \&
Jaimungal~\cite{cartea2015} study risk-averse strategies with penalised
inventory holdings, systematically extended in Cartea et
al.~\cite{cartea2015book}.

The demand-function approach in continuous time but discrete state, developed by Capponi et
al.~\cite{capponi2021market} and Ch\'avez-Casillas et
al.~\cite{chavez2024adaptive}, is the most direct predecessor of the
present work. Capponi et al.~\cite{capponi2021market} obtain the first
closed-form characterisation of the optimal strategy under a stochastic
linear demand with Bernoulli arrival indicators, and
Ch\'avez-Casillas et al.~\cite{chavez2024adaptive} extend this by
allowing the arrival probabilities to depend on the recent history of
the order flow, thereby constructing an adaptive catalogue of optimal
strategies that reacts online to observed market conditions. Our paper
replaces the Bernoulli arrival mechanism in these works with a
state-dependent point process while preserving their analytical
framework.

Hawkes processes~\cite{hawkes1971} have become a standard tool for
modelling the arrival of orders in electronic markets;
Bowsher~\cite{bowser2007} fits multivariate models to security market
events, Bacry et al.~\cite{bacry2015} survey the applications in
finance, and Filimonov \& Sornette~\cite{filimonov2012} use Hawkes
intensities to quantify market reflexivity. These works, however, use
state-independent Hawkes processes whose intensities depend only on the
counting-process history and are blind to the current LOB state. The
state-dependent Hawkes process of Morariu-Patrichi \&
Pakkanen~\cite{morariu2022state} restores this dependence: they
establish existence and uniqueness, provide an exact simulation via
Ogata thinning~\cite{ogata1981}, and develop the maximum-likelihood
methodology used for calibration in the sdHawkes literature. The
present paper is, to the best of our knowledge, the first to embed an
sdHawkes process within a stochastic optimal control framework for
market making.

\subsection*{Outline of the Paper}

The remainder of the paper is organized as follows. Section~2 introduces the modeling framework, linking state-dependent Hawkes order arrivals to a parametric specification of liquidity demand. Building on this framework, Section~3 formulates the optimal placement problem and characterizes its solution through a verification theorem and backward recursions for the value function and optimal quotes. Section~4 describes the parameter estimation framework and its connection to the companion study~\cite{research3}. Section~5 evaluates the resulting strategy using MSFT limit order book data, examining its quoting behavior, inventory dynamics, and performance relative to fixed-spread benchmarks. Section~6 concludes and discusses directions for future research.


\section{The model} 
\subsection{Trading and Event Times Framework}

We distinguish between the discrete times at which the market maker updates quotes and the continuous times at which market orders arrive. Let $[0,T]$ denote a trading session and consider a uniform grid
\[
0=t_0<t_1<\cdots<t_n=T,
\qquad t_k=k\Delta t,
\qquad n\Delta t=T.
\]
In the numerical experiments, we set $\Delta t=1$ second and restrict attention to regular trading hours, from 9:30 a.m.\ to 4:00 p.m.\ Eastern Time, so that $n=23{,}400$ and $T=23{,}400$ seconds.

At each decision time $t_k$, the market maker selects the ask and bid offsets $L_{t_k}^{+}$ and $L_{t_k}^{-}$ using the information available at that time, represented by the filtration $\mathcal{F}_{t_k}$. These offsets determine the quotes posted for the interval $(t_k,t_{k+1}]$. Market orders arriving during this interval generate executions and the corresponding changes in cash and inventory. At $t_{k+1}$, the market maker observes the updated information and revises the quotes for the next interval. In particular, the controls $L_{t_k}^{\pm}$ are $\mathcal{F}_{t_k}$-measurable; their selection cannot depend on order arrivals or execution outcomes realized after $t_k$.

Within each trading interval, market orders arrive at random event times. Writing $M_k$ for the total number of market-order arrivals in $(t_k,t_{k+1}]$, we denote their ordered arrival times by
\[
t_k<\tau_{k,1}<\cdots<\tau_{k,M_k}\leq t_{k+1},
\]
with no event times listed when $M_k=0$. For each side $\delta\in\{+,-\}$, let $N^\delta(t)$ denote the cumulative number of market orders up to time $t$, where $+$ and $-$ correspond to market buy and market sell orders, respectively. The interval counts are then
\[
\Delta N_{k+1}^{\delta}
:=N^\delta(t_{k+1})-N^\delta(t_k),
\qquad
M_k=\Delta N_{k+1}^{+}+\Delta N_{k+1}^{-}.
\]
These counts determine the demand blocks used in Section~2.4.

Throughout, based on real-world market, we can assume that the order-arrival process is simple and non-explosive, so that simultaneous market-order arrivals are excluded and only finitely many arrivals occur on any bounded time interval, almost surely. These are assumptions on the arrival model, rather than consequences of the discrete trading grid. This formulation retains the continuous-time dynamics of order flow while allowing quote decisions to be analyzed through discrete-time dynamic programming.


\subsection{State-dependent Hawkes Process}

To describe the interaction between order arrivals and the
state of the limit order book, we use a state-dependent
Hawkes process. In this framework, past events contribute
to subsequent arrival intensities through excitation
kernels that depend on the order book state associated
with those events.

Let $\mathcal{E}$ denote the finite set of event types and
$\mathcal{X}$ the finite state space. Write $\tau_n$ for
the time of the $n$-th event and $E_n\in\mathcal{E}$ for
its type. For each $e\in\mathcal{E}$, the counting process
and its intensity are specified by
\[
\begin{aligned}
N_e(t)
&= \sum_{n\geq 1}
   \mathbf{1}_{\{\tau_n\leq t,\,E_n=e\}},
\\[4pt]
\lambda_e(t)
&= \nu_e
   + \int_{(0,t)}
     k_{e'e}(dN_e-dA_\tau,V_n)\,dN_e(s),
\end{aligned}
\]
where
$k_{e'e}:\mathbb{R}_{+}\times I_n\rightarrow\mathbb{R}_{+}$
is the excitation kernel and $\nu_e$ is the baseline
intensity. The process $\lambda_e(t)$ is the
$\mathbb{F}^{N,X}$-intensity of $N_e$ and therefore
describes the conditional arrival rate given the history
of both the order flow and the state process. The kernel
$k_{e'e}(t,x)$ specifies the contribution of a past
type-$e'$ event occurring in state $x$ to the intensity
of type-$e$ events. We consider the exponential form
\[
k_{e'e}(t,x)
= \alpha_{e'xe}e^{-\beta_{e'xe}t}.
\]

The state process $X(t)$ takes values in $\mathcal{X}$
and is c\`adl\`ag and piecewise constant, changing only
at event times $T_n$. Its transition law depends on
the event type:
\[
\mathbb{P}\bigl[X(T_n)=x
   \mid E_n,\mathcal{F}_{T_n}\bigr]
= \phi_{E_n}(X(T_n-),x),
\qquad n\in\mathbb{N},\quad x\in V_n.
\]
Here, $\phi_e(x',x)$ is the probability of a transition
from $x'$ to $x$ following a type-$e$ event, with
\[
\sum_{x\in\mathcal{X}}\phi_e(x',x)=1,
\qquad e\in\mathcal{E},\quad x'\in\mathcal{X}.
\]
We write
$\mathcal{F}_{T_n-}
=\bigvee_{\epsilon>0}\mathcal{F}_{T_n-\epsilon}$
for the information available immediately before $T_n$.
In the empirical application,
$\mathcal{E}=\{\text{ask},\text{bid}\}$.
The state space $\mathcal{X}$ may encode the bid--ask
spread, for example through the categories $\{1,2+\}$,
or queue imbalance through a finite collection of bins.

For the order book specification considered here, we
distinguish market buy (MB), market sell (MS), limit
buy (LB), and limit sell (LS) events. Market buy and
sell orders consume liquidity on the ask and bid sides,
respectively, whereas limit buy and sell orders add
liquidity to the corresponding sides of the book.
Cancellation events (CB and CS) are omitted for
simplicity. The state variables comprise the bid--ask
spread in ticks, the depths $D_b(l,t_k)$ and $D_a(l,t_k)$
at levels $l=1,\ldots,5$, and the queue imbalance
\[
\text{QI}(t_k)
= \frac{D_b(l,t_k)-D_a(l,t_k)}
       {D_b(l,t_k)+D_a(l,t_k)}.
\]

To connect the event-level description with the trading
intervals introduced in Section~2.1, let $E_k$ and $X_k$
denote the interval-level event and state variables.
The number of market orders in an interval is
$N=N_{t_{k+1}}-N_{t_k}$. Based on the trading data,
we use the grouping
$N=\{0,1-10,11-20,21-30,31-40,40+\}$ and the
representations
\[
\begin{aligned}
E_k
&\in \{N\times(\text{bid},\text{ask})\},
\\[4pt]
X_k
&= \{\text{sell}_{++},\text{sell}_{+},\text{neutral},
       \text{buy}_{+},\text{buy}_{++}\}.
\end{aligned}
\]

Order volumes are incorporated as marks in the
state-dependent Hawkes specification. For market-order
events $e\in\{\text{MB},\text{MS}\}$, the marked
intensity is
\begin{equation}
\lambda_e(t,z)
=
g(z;\eta_e)
\left(
\nu_e
+
\sum_{e'\in\{\text{MB},\text{MS},\text{LB},\text{LS}\}}
\int_{[0,t)}
k_{e'e}(t-s,z',X(s))\,dN_{e'}(s,z')
\right),
\end{equation}
where $g(z;\eta_e)$ is the mark density and
\[
k_{e'e}(t,z',x)
= z'\cdot\alpha_{e'xe}e^{-\beta_{e'xe}t}.
\]
Thus, the excitation generated by a past order depends
on its type, volume, and associated order book state.

The sdHawkes process under exponetial kernel has the Markov property on
$(N_t,\lambda_t)$:
\begin{equation}
\begin{aligned}
&P\bigl[(N_{t+s},\lambda_{t+s})\in A
   \mid \sigma(N_s,\lambda_s)\bigr]
\\
&\qquad =
P\bigl[(N_{t+s},\lambda_{t+s})\in A
   \mid (N_s,\lambda_s)=(n,l)\bigr].
\end{aligned}
\end{equation}
The waiting time associated with a $s\rightarrow 0$
transition of a continuous-time Markov process is
exponentially distributed. The process is
time-homogeneous, and the transition probability
does not depend on $S$.

\begin{proof}
The relationship between the intensity and the jump
probability is
\[
\lambda_t
= \frac{E[\Delta N=1\mid\mathcal{F}_s]}{\Delta t},
\qquad
P[\Delta N=1\mid\mathcal{F}_s]
= \lambda_t\Delta t.
\]
Using the intensity dynamics
\[
d\lambda_t
= -\beta(\lambda_t-\mu)\,dt+\alpha\,dN_t,
\]
we obtain
\[
\begin{aligned}
\frac{d\lambda_t}{dt}
&= -\beta(\lambda_t-\mu)
   +\alpha\cdot\frac{dN_t}{dt},
\\[5pt]
\lambda_t-\mu
&= \frac{\alpha}{\beta}\cdot\frac{dN_t}{dt}
   -\frac{1}{\beta}\cdot\frac{d\lambda_t}{dt},
\\[5pt]
\lambda_t
&= \mu
   +\frac{\alpha}{\beta}\cdot\frac{dN_t}{dt}
   -\frac{1}{\beta}\cdot\frac{d\lambda_t}{dt}.
\end{aligned}
\]
Thus, the jump probability depends only on
$(N_t,\lambda_t)$.
\end{proof}


\subsection{Parametric Market Order Demand Function by sdHawkes}

The parametric demand function represents the \textit{unsatisfied demand for market order} at different price levels in the limit order book. This concept extends traditional microeconomic demand theory to high-frequency financial markets, where demand manifests itself as aggressive market orders that consume available liquidity. The demand function captures:
\begin{itemize}
    \item \textbf{Buy-side demand}: The excess of Buy MO over available ask-side liquidity at each price level
    \item \textbf{Sell-side demand}: The excess of Sell MO over available bid-side liquidity at each price level
    \item \textbf{Step-function structure}: Demand decreases with price level due to the discrete nature of limit order books.
\end{itemize}
The arrival of MO can be estimated by a sdHawkes process, suppose we have $N = N_e(t_{k+1}) - N_e(t_k)$ arrived market orders during $[t_k, t_{k+1}]$, we fixed N and for every group of N, we computed the probability executed at each single level. Then we can estimate the distribution density of all arrival of market orders $f_k{(N,l)}$ during $[t_k, t_{k+1}]$, and this distribution is different for every second . We have the following assumptions: 

\begin{assumption}
The market order counts $N_{t_{}}^\pm$ are random variables with conditional distributions given $\mathcal{F}_{t_k}$ known (via Hawkes process prediction).
\end{assumption}

\begin{assumption}
The fundamental price process $\{S_{t_k}\}$ is adapted and conditionally independent of $N_{t_{k+1}}^\pm$ given $\mathcal{F}_{t_k}$.
\end{assumption}

\begin{assumption}
The random coefficients $A_{t_{k+1}}^\pm$ and $B_{t_{k+1}}^\pm$ are conditionally independent (given $\mathcal{F}_{t_k}$) across buy and sell sides, though the model can be extended to correlated cases.
\end{assumption} 

We directly model the probability density function $f^\pm(L, N)$, representing the probability of execution at spread $L$ given market order count $N$, we use the Average MO Size(AVOS) as our simulated order size. The number of filled shares is:
\[
Q_{t_{k+1}}^\pm = AVOS \cdot f_N^\pm(L_{t_k}^\pm)
\]
To maintain analytical tractability, we assume $f_N^\pm(L)$ is linear in $L$:
\[
f_N^\pm(L) = a^\pm_N + b^\pm_N \times L
\]
where $a^\pm_N$ and $b^\pm_N$ are coefficients dependent on $N$. To ensure $f_N^\pm(L) \in [0,1]$, we assume $L$ takes values such that this condition is satisfied (interior solution).


\subsection{Block Probabilities of $N$ under sdHawkes}

To aggregate the block-specific demand moments introduced before, we require the probability that the market-order count on each side falls within a given
block. For $\delta\in\{+,-\}$, define
\begin{equation}
    p_j^\delta
    \equiv
    P\!\left[
        N^\delta(t+\Delta t)-N^\delta(t)
        \in \mathrm{Block}(j)
    \right],
    \qquad j=0,1,\ldots,5,
    \label{eq:target_block_prob}
\end{equation}
where $\delta=+$ and $\delta=-$ correspond to market
buy and market sell orders, respectively. The blocks
are $\{0\}$, $\{1,\ldots,10\}$, $\{11,\ldots,20\}$,
$\{21,\ldots,30\}$, $\{31,\ldots,40\}$, and
$\{41,42,\ldots\}$.

The probabilities in \eqref{eq:target_block_prob} are
unconditional per-side marginals, rather than
distributions conditioned on the information available
at an individual quote-update time. For a nonzero block,
the corresponding weight conditional on at least one
arrival is $p_j^\delta/(1-p_0^\delta)$. These weights
combine the block-specific demand moments into the
mixed moments used in the dynamic programming
procedure.

The evaluation of \eqref{eq:target_block_prob} is
developed in the companion paper~\cite{research3}.
That study formulates the forward equations for the
state-dependent Hawkes model and evaluates the
associated probability generating function using a
leading-order bicharacteristic approximation, with
Monte Carlo validation based on Ogata's thinning
algorithm. We use the resulting probabilities as
inputs to the placement problem. The numerical values,
obtained from the same MSFT LOBSTER sample and
marginalized over the stationary distribution of the
queue-imbalance state, are reported in Section~5.3.



\section{Value Function for Optimal Placement} 

We now formulate the market maker's quote-placement
under the order-arrival and demand specifications
introduced in Section~2. At each trading time $t_k$, the
market maker selects the ask and bid offsets
$L_{t_k}^{+}$ and $L_{t_k}^{-}$ using the information
available in $\mathcal{F}_{t_k}$. These decisions determine
the subsequent executions and hence the evolution of
cash and inventory.

The objective is to maximize expected terminal
marked-to-market wealth, subject to a quadratic penalty
on the remaining inventory. For a given state at $t_k$,
let $\mathcal{A}_{t_k}$ denote the set of admissible
quote strategies over the remaining trading times
$t_k,\ldots,t_{n-1}$. Each strategy is adapted to the
available information and satisfies the integrability
conditions required below. The value process is
\begin{equation}
V_{t_k}
=
\operatorname*{ess\,sup}_{(L^+,L^-)\in\mathcal{A}_{t_k}}
\mathbb{E}\!\left[
W_T+S_T I_T-\lambda I_T^2
\mid \mathcal{F}_{t_k}
\right],
\label{eq:placement_value}
\end{equation}
where $\lambda>0$ controls the terminal inventory
penalty and $T=t_n$. The terminal variables in
\eqref{eq:placement_value} are generated by the
chosen strategy from the current state.

The dynamic programming principle relates this
remaining-horizon problem to a single quote decision
followed by optimal continuation. Writing
$\mathcal{U}_{t_k}$ for the admissible
$\mathcal{F}_{t_k}$-measurable quote pairs at $t_k$,
we have
\begin{equation}
V_{t_k}
=
\operatorname*{ess\,sup}_{
(L_{t_k}^{+},L_{t_k}^{-})\in\mathcal{U}_{t_k}}
\mathbb{E}\!\left[
V_{t_{k+1}}
\mid \mathcal{F}_{t_k}
\right],
\qquad k=0,\ldots,n-1,
\label{eq:placement_bellman}
\end{equation}
with terminal condition
\begin{equation}
V_T=W_T+S_T I_T-\lambda I_T^2.
\label{eq:placement_terminal_value}
\end{equation}
The continuation value in
\eqref{eq:placement_bellman} is evaluated at the
state reached under the current quote pair.
To solve this recursion, we first introduce a
candidate value function and state the conditions
under which it identifies the value process and
an optimal strategy.

\subsection{Candidate Value Function and Verification}

The terminal objective suggests a candidate that is
affine in cash and quadratic in inventory. Following
the stochastic-demand framework of
\cite{capponi2021market,chavez2024adaptive}, we consider
\begin{equation}
v(t_k,S_{t_k},W_{t_k},I_{t_k})
=
W_{t_k}+S_{t_k}I_{t_k}
+\alpha_{t_k}I_{t_k}^{2}
+h_{t_k}I_{t_k}+g_{t_k}.
\label{eq:placement_candidate}
\end{equation}
The coefficients $\alpha_{t_k}$, $h_{t_k}$, and
$g_{t_k}$ are $\mathcal{F}_{t_k}$-measurable, with
deterministic coefficients included as a special case.
When they are random, the dependence of $v$ on the
available information is suppressed in the notation.
Matching the terminal objective requires
\begin{equation}
\alpha_T=-\lambda,
\qquad h_T=0,
\qquad g_T=0.
\label{eq:placement_terminal_coefficients}
\end{equation}

The coefficients are determined by substituting
\eqref{eq:placement_candidate} into the Bellman
equation and matching powers of inventory after
optimization over the quote pair. Section~3.4
develops the resulting backward recursions and
candidate controls $L_{t_k}^{\pm,*}$. The following
verification result explains how the corresponding
one-step optimality conditions establish optimality
over the full trading horizon.

\begin{theorem}[Verification]
\label{thm:placement_verification}
Let $v$ be the candidate in
\eqref{eq:placement_candidate}, with terminal
coefficients
$\alpha_T=-\lambda$, $h_T=0$, and $g_T=0$.
Suppose that, under every admissible strategy, the
candidate process is integrable and satisfies
\begin{equation}
\begin{aligned}
&\mathbb{E}\!\left[
v(t_{k+1},S_{t_{k+1}},W_{t_{k+1}},I_{t_{k+1}})
\mid\mathcal{F}_{t_k}
\right]
\\
&\qquad\leq
v(t_k,S_{t_k},W_{t_k},I_{t_k}),
\qquad k=0,\ldots,n-1.
\end{aligned}
\label{eq:placement_supermartingale_condition}
\end{equation}
Suppose further that the candidate controls
$L_{t_k}^{\pm,*}$ define an admissible strategy
for which equality holds at every step:
\begin{equation}
\begin{aligned}
&\mathbb{E}\!\left[
v(t_{k+1},S_{t_{k+1}},W_{t_{k+1}}^*,I_{t_{k+1}}^*)
\mid\mathcal{F}_{t_k}
\right]
\\
&\qquad=
v(t_k,S_{t_k},W_{t_k}^*,I_{t_k}^*).
\end{aligned}
\label{eq:placement_martingale_condition}
\end{equation}
Then $v$ coincides with the value process:
\begin{equation}
V_{t_k}
=
v(t_k,S_{t_k},W_{t_k},I_{t_k}),
\qquad k=0,\ldots,n,
\label{eq:placement_value_identification}
\end{equation}
and the strategy $L^{\pm,*}$ is optimal.
\end{theorem}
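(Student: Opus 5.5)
The plan is the standard discrete-time verification argument: a supermartingale inequality gives an upper bound on every admissible payoff, and the martingale equality shows the candidate controls attain it. Fix $k\in\{0,\ldots,n\}$ and an arbitrary admissible strategy $(L^+,L^-)\in\mathcal{A}_{t_k}$, with generated state process $(W_{t_j},I_{t_j})_{j\ge k}$ started from the current state at $t_k$. Abbreviate $v_j:=v(t_j,S_{t_j},W_{t_j},I_{t_j})$. The first step uses the terminal coefficients $\alpha_T=-\lambda$, $h_T=g_T=0$ in \eqref{eq:placement_candidate}. They give
\[
v_n=W_T+S_TI_T-\lambda I_T^2,
\]
which is exactly the terminal payoff \eqref{eq:placement_terminal_value}. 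This also covers the case $k=n$ of \eqref{eq:placement_value_identification}.

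For the upper bound, let $k<n$ and apply \eqref{eq:placement_supermartingale_condition} backward from $j=n-1$ down to $j=k$. Integrability is assumed, so the tower property is legitimate, and each step gives $\mathbb{E}[v_{j+1}\mid\mathcal{F}_{t_j}]\le v_j$. Conditioning on $\mathcal{F}_{t_k}\subseteq\mathcal{F}_{t_j}$ and iterating yields
\[
\mathbb{E}\!\left[W_T+S_TI_T-\lambda I_T^2\mid\mathcal{F}_{t_k}\right]=\mathbb{E}[v_n\mid\mathcal{F}_{t_k}]\le v_k \quad\text{a.s.}
\]
The right-hand side depends only on the state at $t_k$, not on the strategy. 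Taking the essential supremum over $\mathcal{A}_{t_k}$ therefore gives $V_{t_k}\le v_k$. A formal induction on $n-k$ makes the iteration rigorous.

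For the lower bound and optimality, repeat the same telescoping along the candidate strategy $L^{\pm,*}$, which is admissible by hypothesis. Here every inequality is replaced by the equality \eqref{eq:placement_martingale_condition}, so
\[
\mathbb{E}[W_T^*+S_TI_T^*-\lambda (I_T^*)^2\mid\mathcal{F}_{t_k}]=v_k.
\]
Since $L^{\pm,*}$ belongs to $\mathcal{A}_{t_k}$, this gives $V_{t_k}\ge v_k$. Combined with the upper bound, we obtain $V_{t_k}=v_k$. Moreover, $L^{\pm,*}$ attains the essential supremum, so it is optimal.

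The main obstacle is bookkeeping rather than analysis. The candidate strategy must be restricted to $[t_k,T]$ and started from an arbitrary $\mathcal{F}_{t_k}$-measurable state. The martingale condition must therefore hold from every such starting state, not only along the single trajectory started at $t_0$. I would read \eqref{eq:placement_martingale_condition} as holding for the controlled process from any admissible initial state at $t_k$. Under that reading the essential-supremum step and the tower-property iteration go through directly.
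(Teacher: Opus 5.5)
Your proposal is correct and follows essentially the same route as the paper's Step~3: the terminal coefficients identify $v_n$ with the penalized payoff, telescoping the supermartingale inequality bounds every admissible payoff by $v_k$, and telescoping the martingale equality along the admissible candidate shows that this bound is attained. The paper's Steps~1--2 only re-derive the two hypotheses from the Bellman recursion. Your remark that the equality must hold from every $\mathcal{F}_{t_k}$-measurable starting state makes explicit an assumption the paper uses without stating it when it iterates from $v(t_k,S_{t_k},W_{t_k},I_{t_k})$ with the unstarred state.
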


Under the hypotheses of
Theorem~\ref{thm:placement_verification}, the
candidate process is a supermartingale under any
admissible strategy and a martingale under the
candidate optimal strategy. Iterating these
one-step relations and using the terminal condition
gives the verification argument below.


\begin{proof} \\
We outline the main steps here for completeness: \\
\textbf{Step 1 (Supermartingale property for any admissible strategy).}  
For an arbitrary admissible strategy \((L^+_{t_k}, L^-_{t_k})_{k\ge 0}\), denote by \(W_{t_i}, I_{t_i}\) the corresponding cash and inventory processes. From the dynamic programming principle and the definition of \(v\) we have
\begin{equation}
\mathbb{E}\bigl[ v(t_{i+1}, S_{t_{i+1}}, W_{t_{i+1}}, I_{t_{i+1}}) \bigm| \mathcal{F}_{t_i} \bigr]
\le \sup_{(L^+,L^-)} \mathbb{E}\bigl[ v(t_{i+1}, S_{t_{i+1}}, \tilde W_{t_{i+1}}, \tilde I_{t_{i+1}}) \bigm| \mathcal{F}_{t_i} \bigr]
= v(t_i, S_{t_i}, W_{t_i}, I_{t_i}),
\label{eq:placement_supermartingale_proof}
\end{equation}
where the supremum is taken over all admissible controls at time \(t_i\). The last equality holds because the coefficients \(\alpha_{t_i}, h_{t_i}, g_{t_i}\) are exactly chosen to satisfy the Bellman equation~\eqref{eq:placement_bellman} (with the new dynamics). Hence \(\{v(t_k, S_{t_k}, W_{t_k}, I_{t_k})\}\) is a super-martingale.

\textbf{Step 2 (Martingale property under the optimal strategy).}  
Now let \(W^*_{t_i}, I^*_{t_i}\) be the cash and inventory processes generated by the optimal controls \(L^{\pm,*}_{t_i}\) given in~\eqref{eq:placement_optimal_quotes}. By construction of the coefficients \(\alpha_{t_i}, h_{t_i}, g_{t_i}\) (see the recursions~\eqref{eq:placement_alpha_recursion}--\eqref{eq:placement_g_recursion} and the first-order conditions~\eqref{eq:placement_first_order_conditions}), we have
\begin{equation}
\mathbb{E}\bigl[ v(t_{i+1}, S_{t_{i+1}}, W^*_{t_{i+1}}, I^*_{t_{i+1}}) \bigm| \mathcal{F}_{t_i} \bigr] = v(t_i, S_{t_i}, W^*_{t_i}, I^*_{t_i}).
\label{eq:placement_martingale_proof}
\end{equation}
Thus the value process is a martingale under the optimal strategy.

\textbf{Step 3 (Verification of optimality).}  
Iterating the martingale equality from \(t_k\) to \(T\) yields
\begin{equation}
v(t_k, S_{t_k}, W_{t_k}, I_{t_k}) = \mathbb{E}\bigl[ v(T, S_T, W^*_T, I^*_T) \bigm| \mathcal{F}_{t_k} \bigr]
= \mathbb{E}\bigl[ W^*_T + S_T I^*_T - \lambda (I^*_T)^2 \bigm| \mathcal{F}_{t_k} \bigr].
\label{eq:placement_optimal_terminal_payoff}
\end{equation}
On the other hand, for any admissible strategy the super-martingale property gives
\begin{equation}
v(t_k, S_{t_k}, W_{t_k}, I_{t_k}) \ge \mathbb{E}\bigl[ W_T + S_T I_T - \lambda I_T^2 \bigm| \mathcal{F}_{t_k} \bigr].
\label{eq:placement_terminal_upper_bound}
\end{equation}
Therefore
\begin{equation}
v(t_k, S_{t_k}, W_{t_k}, I_{t_k}) = \sup_{(L^+,L^-)\in\mathcal{A}} \mathbb{E}\bigl[ W_T + S_T I_T - \lambda I_T^2 \bigm| \mathcal{F}_{t_k} \bigr] = V_{t_k},
\label{eq:placement_verification_conclusion}
\end{equation}
and the controls \(L^{\pm,*}_{t_k}\) attain the supremum.
\end{proof}


\subsection{Dynamics and Objective Function}
The cash and inventory updates remain as in the original paper:
\begin{equation}
W_{t_{k+1}} = W_{t_k} + (S_{t_k} + L_{t_k}^+) Q_{t_{k+1}}^+ - (S_{t_k} - L_{t_k}^-) Q_{t_{k+1}}^-
\label{eq:placement_cash_dynamics}
\end{equation}
\begin{equation}
I_{t_{k+1}} = I_{t_k} - Q_{t_{k+1}}^+ + Q_{t_{k+1}}^-
\label{eq:placement_inventory_dynamics}
\end{equation}
Substituting the expression for $Q_{t_{k+1}}^\pm$:
\begin{equation}
W_{t_{k+1}} = W_{t_k} + (S_{t_k} + L_{t_k}^+) (A_{t_{k+1}}^+ + B_{t_{k+1}}^+ L_{t_k}^+) - (S_{t_k} - L_{t_k}^-) (A_{t_{k+1}}^- + B_{t_{k+1}}^- L_{t_k}^-)
\label{eq:placement_cash_linear_demand}
\end{equation}
\begin{equation}
I_{t_{k+1}} = I_{t_k} - (A_{t_{k+1}}^+ + B_{t_{k+1}}^+ L_{t_k}^+) + (A_{t_{k+1}}^- + B_{t_{k+1}}^- L_{t_k}^-)
\label{eq:placement_inventory_linear_demand}
\end{equation}
These equations are quadratic in $L_{t_k}^\pm$. The terminal expected utility remains:
\begin{equation}
\max \mathbb{E} \left[ W_T + S_T I_T - \lambda I_T^2 \right]
\label{eq:placement_terminal_objective}
\end{equation}
where $\lambda > 0$ is the inventory penalty parameter.


\subsection{Conditional Demand Moments}
We introduce the following notation:
\begin{equation}
\begin{aligned}
\mu_A^\pm &= \mathbb{E} \left[ A_{t_{k+1}}^\pm \mid \mathcal{F}_{t_k} \right], \quad 
\mu_B^\pm = \mathbb{E} \left[ B_{t_{k+1}}^\pm \mid \mathcal{F}_{t_k} \right] \\
\mu_{A^2}^\pm &= \mathbb{E} \left[ (A_{t_{k+1}}^\pm)^2 \mid \mathcal{F}_{t_k} \right], \quad 
\mu_{B^2}^\pm = \mathbb{E} \left[ (B_{t_{k+1}}^\pm)^2 \mid \mathcal{F}_{t_k} \right] \\
\mu_{AB}^\pm &= \mathbb{E} \left[ A_{t_{k+1}}^\pm B_{t_{k+1}}^\pm \mid \mathcal{F}_{t_k} \right]
\end{aligned}
\label{eq:placement_conditional_moments}
\end{equation}
Due to the conditional independence assumption, cross-terms such as $\mathbb{E} \left[ A_{t_{k+1}}^+ A_{t_{k+1}}^- \mid \mathcal{F}_{t_k} \right] = \mu_A^+ \mu_A^-$, etc.

Substituting the expression for $V_{t_{k+1}}$:
\begin{equation}
\mathbb{E} \left[ V_{t_{k+1}} \mid \mathcal{F}_{t_k} \right] = \mathbb{E} \left[ W_{t_{k+1}} + \alpha_{t_{k+1}} I_{t_{k+1}}^2 + S_{t_{k+1}} I_{t_{k+1}} + h_{t_{k+1}} I_{t_{k+1}} + g_{t_{k+1}} \mid \mathcal{F}_{t_k} \right]
\label{eq:placement_continuation_expectation}
\end{equation}
Computing each term:
\begin{equation}
\begin{aligned}
\mathbb{E} \left[ W_{t_{k+1}} \mid \mathcal{F}_{t_k} \right] &= W_{t_k} + S_{t_k} \left( \mu_A^+ + \mu_B^+ L_{t_k}^+ - \mu_A^- - \mu_B^- L_{t_k}^- \right) \\
&\quad + L_{t_k}^+ (\mu_A^+ + \mu_B^+ L_{t_k}^+) + L_{t_k}^- (\mu_A^- + \mu_B^- L_{t_k}^-) \\
\mathbb{E} \left[ I_{t_{k+1}} \mid \mathcal{F}_{t_k} \right] &= I_{t_k} - (\mu_A^+ + \mu_B^+ L_{t_k}^+) + (\mu_A^- + \mu_B^- L_{t_k}^-) \\
\mathbb{E} \left[ I_{t_{k+1}}^2 \mid \mathcal{F}_{t_k} \right] &= \left( \mathbb{E} \left[ I_{t_{k+1}} \mid \mathcal{F}_{t_k} \right] \right)^2 + \text{Var} \left( I_{t_{k+1}} \mid \mathcal{F}_{t_k} \right)
\end{aligned}
\label{eq:placement_cash_inventory_moments}
\end{equation}
where the variance term involves $\mu_{A^2}^\pm, \mu_{B^2}^\pm, \mu_{AB}^\pm$, etc.
\begin{equation}
\mathbb{E} \left[ S_{t_{k+1}} I_{t_{k+1}} \mid \mathcal{F}_{t_k} \right] = (S_{t_k} + \Delta_{t_k}) \mathbb{E} \left[ I_{t_{k+1}} \mid \mathcal{F}_{t_k} \right]
\label{eq:placement_price_inventory_moment}
\end{equation}
where $\Delta_{t_k} = \mathbb{E} \left[ S_{t_{k+1}} - S_{t_k} \mid \mathcal{F}_{t_k} \right]$.
After detailed expansion, $\mathbb{E} \left[ V_{t_{k+1}} \mid \mathcal{F}_{t_k} \right]$ is a quadratic function of $L_{t_k}^+$ and $L_{t_k}^-$. Therefore, the optimization problem is convex with a unique analytical solution.


\subsection{Optimal Control Expressions and Coefficient Recursions}
Solving via first-order conditions:
\begin{equation}
\frac{\partial}{\partial L_{t_k}^+} \mathbb{E} \left[ V_{t_{k+1}} \mid \mathcal{F}_{t_k} \right] = 0, \quad 
\frac{\partial}{\partial L_{t_k}^-} \mathbb{E} \left[ V_{t_{k+1}} \mid \mathcal{F}_{t_k} \right] = 0
\label{eq:placement_first_order_conditions}
\end{equation}
The optimal controls $L_{t_k}^{\pm,*}$ have the same form as in the original paper:
\begin{equation}
L_{t_k}^{+,*} = {}^{(i)}A_{t_k}^+ I_{t_k} + {}^{(ii)}A_{t_k}^+ + {}^{(iii)}A_{t_k}^+, \quad 
L_{t_k}^{-,*} = -{}^{(i)}A_{t_k}^- I_{t_k} - {}^{(ii)}A_{t_k}^- + {}^{(iii)}A_{t_k}^-
\label{eq:placement_quote_structure}
\end{equation}
where the coefficients ${}^{(i)}A_{t_k}^\pm, {}^{(ii)}A_{t_k}^\pm, {}^{(iii)}A_{t_k}^\pm$ are defined by:
\begin{equation}
\begin{aligned}
{}^{(i)}A_{t_k}^\pm &= \frac{\beta_{t_k}^\pm \alpha_{t_{k+1}}}{\gamma_{t_k}}, \quad 
{}^{(ii)}A_{t_k}^\pm = \frac{\beta_{t_k}^\pm h_{t_{k+1}}}{2\gamma_{t_k}} \\
{}^{(iii)}A_{t_k}^\pm &= \frac{1}{2\gamma_{t_k}} \left[ \mu_{AB}^\pm - 2\alpha_{t_{k+1}} \mu_{B^2}^\pm + 2\alpha_{t_{k+1}} \mu_B^+ \mu_B^- \left( {}^{(i)}A_{t_k}^\mp \right) \right] \quad \text{(simplified form)}
\end{aligned}
\label{eq:placement_simplified_quote_coefficients}
\end{equation}
where:
\begin{equation}
\begin{aligned}
\gamma_{t_k} &= \left( \alpha_{t_{k+1}} \mu_B^+ \mu_B^- \right)^2 - \left( \alpha_{t_{k+1}} \mu_{B^2}^+ - \mu_B^+ \right) \left( \alpha_{t_{k+1}} \mu_{B^2}^- - \mu_B^- \right) \\
\beta_{t_k}^\pm &= \mu_B^\pm \left( \alpha_{t_{k+1}} \mu_{B^2}^\mp - \mu_B^\mp \right) - \alpha_{t_{k+1}} \mu_B^\pm (\mu_B^\mp)^2
\end{aligned}
\label{eq:placement_simplified_gamma_beta}
\end{equation}
Note: This is a simplified form; the actual expressions may be more complex but maintain the same structure.

The coefficients $\alpha_{t_k}, h_{t_k}, g_{t_k}$ recursive relations as follows:
\begin{equation}
\begin{aligned}
\al_{t_k} &= \al_{t_{k+1}} + \sum_{\delta=\pm} \pi_{t_{k+1}}^{\delta} \Bigl[ \bigl( \al_{t_{k+1}} \bar{\mu}_{B^2}^{\delta} + \bar{\mu}_B^{\delta} \bigr) \bigl( {}^{(1)}\!A_{t_k}^{\delta} \bigr)^2 + 2\al_{t_{k+1}} \bar{\mu}_B^{\delta} {}^{(1)}\!A_{t_k}^{\delta} \Bigr] \\
&\qquad + 2\al_{t_{k+1}} \pi_{t_{k+1}}(1,1) \bar{\mu}_B^+ \bar{\mu}_B^- \, {}^{(1)}\!A_{t_k}^+ {}^{(1)}\!A_{t_k}^-,
\end{aligned}
\label{eq:placement_alpha_recursion_statement}
\end{equation}

\begin{equation}
\begin{aligned}
h_{t_k} &= h_{t_{k+1}} + \sum_{\delta=\pm} \pi_{t_{k+1}}^{\delta} \Bigg\{
2\bigl( \al_{t_{k+1}} \bar{\mu}_{B^2}^{\delta} + \bar{\mu}_B^{\delta} \bigr) \Bigl[ {}^{(1)}\!A_{t_k}^{\delta} \bigl( \delta {}^{(3)}\!A_{t_k}^{\delta} + {}^{(2)}\!A_{t_k}^{\delta} \bigr) \Bigr] \\
&\qquad + 2\al_{t_{k+1}} \bar{\mu}_B^{\delta} \bigl( \delta {}^{(3)}\!A_{t_k}^{\delta} + {}^{(2)}\!A_{t_k}^{\delta} \bigr) - 2\al_{t_{k+1}} \delta \bar{\mu}_{AB}^{\delta} \\
&\qquad + \delta {}^{(1)}\!A_{t_k}^{\delta} \Bigl( \bar{\mu}_A^{\delta} - \delta h_{t_{k+1}} \bar{\mu}_B^{\delta} + 2\al_{t_{k+1}} \bar{\mu}_{AB}^{\delta} \Bigr)
\Bigg\} \\
&\qquad - 2\al_{t_{k+1}} \pi_{t_{k+1}}(1,1) \bar{\mu}_B^+ \bar{\mu}_B^- \Bigg[
{}^{(1)}\!A_{t_k}^+ \bigl( {}^{(3)}\!A_{t_k}^- - {}^{(2)}\!A_{t_k}^- \bigr) - {}^{(1)}\!A_{t_k}^- \bigl( {}^{(2)}\!A_{t_k}^+ + {}^{(3)}\!A_{t_k}^+ \bigr) \\
&\qquad\qquad - \frac{\bar{\mu}_A^+}{\bar{\mu}_B^+} {}^{(1)}\!A_{t_k}^- + \frac{\bar{\mu}_A^-}{\bar{\mu}_B^-} {}^{(1)}\!A_{t_k}^+
\Bigg],
\end{aligned}
\label{eq:placement_h_recursion_statement}
\end{equation}

\begin{equation}
\begin{aligned}
g_{t_k} &= g_{t_{k+1}} + \sum_{\delta=\pm} \pi_{t_{k+1}}^{\delta} \Bigg\{
\bigl( \al_{t_{k+1}} \bar{\mu}_{B^2}^{\delta} + \bar{\mu}_B^{\delta} \bigr) \bigl( {}^{(3)}\!A_{t_k}^{\delta} + \delta {}^{(2)}\!A_{t_k}^{\delta} \bigr)^2 + \al_{t_{k+1}} \bar{\mu}_{A^2}^{\delta} \\
&\qquad - \delta h_{t_{k+1}} \bar{\mu}_A^{\delta}
+ \Bigl( \bar{\mu}_A^{\delta} - \delta h_{t_{k+1}} \bar{\mu}_B^{\delta} + 2\al_{t_{k+1}} \bar{\mu}_{AB}^{\delta} \Bigr)
\bigl( {}^{(3)}\!A_{t_k}^{\delta} + \delta {}^{(2)}\!A_{t_k}^{\delta} \bigr)
\Bigg\} \\
&\qquad - 2\al_{t_{k+1}} \pi_{t_{k+1}}(1,1) \bar{\mu}_B^+ \bar{\mu}_B^- \Bigg[
\bigl( {}^{(2)}\!A_{t_k}^+ + {}^{(3)}\!A_{t_k}^+ \bigr) \bigl( {}^{(3)}\!A_{t_k}^- - {}^{(2)}\!A_{t_k}^- \bigr) \\
&\qquad\qquad + \frac{\bar{\mu}_A^+}{\bar{\mu}_B^+} \bigl( {}^{(3)}\!A_{t_k}^- - {}^{(2)}\!A_{t_k}^- \bigr)
+ \frac{\bar{\mu}_A^-}{\bar{\mu}_B^-} \bigl( {}^{(2)}\!A_{t_k}^+ + {}^{(3)}\!A_{t_k}^+ \bigr)
+ \frac{\bar{\mu}_A^+ \bar{\mu}_A^-}{\bar{\mu}_B^+ \bar{\mu}_B^-}
\Bigg].
\end{aligned}
\label{eq:placement_g_recursion_statement}
\end{equation}

\begin{proof}  \\
We partition the number of market orders arriving during \([t_k,t_{k+1})\) into bins (blocks). For each bin \(j = 1,\dots,5\) (excluding the zero‑arrival case) we have random variables \(A_{t_{k+1}}^\de(j), B_{t_{k+1}}^\de(j)\) describing the demand when the order count falls into that bin. The probability of being in bin \(j\) is
\begin{equation}
p_j := \pi_{t_{k+1}}(1) \cdot P\bigl[N \in \text{Block}(j)\bigr],
\label{eq:placement_bin_weight}
\end{equation}
where \(\pi_{t_{k+1}}(1)\) is the probability of at least one arrival (of either side) and the conditional probability refers to the bin given that at least one arrival occurs. For \(N=0\) we have \(p_0=0\) and it is excluded.

The value function now takes the form
\begin{equation}
\begin{aligned}
V_{t_k} = \sup_{L_{t_k}^{\pm}} \sum_{j =1}^{5} \mathbb{E}\Bigg[ 
&\sum_{\delta = \pm} (S_{t_k} + \delta L_{t_k}^{\delta}) \, (A_{t_{k+1}}^{\delta}(j) + B_{t_{k+1}}^{\delta}(j)*L_{t_k}^{\delta}) \\
&+ \alpha_{t_{k+1}} \left( I_{t_k} - \sum_{\delta = \pm} (A_{t_{k+1}}^{\delta}(j) + B_{t_{k+1}}^{\delta}(j)*L_{t_k}^{\delta}) \right)^{\!2} \\
&+ S_{t_{k+1}} \left( I_{t_k} - \sum_{\delta = \pm} (A_{t_{k+1}}^{\delta}(j) + B_{t_{k+1}}^{\delta}(j)*L_{t_k}^{\delta}) \right) \\
&+ \widetilde{h}_{t_{k+1}} \left( I_{t_k} - \sum_{\delta = \pm} (A_{t_{k+1}}^{\delta}(j) + B_{t_{k+1}}^{\delta}(j)*L_{t_k}^{\delta}) \right) \\
&+ \widetilde{g}_{t_{k+1}} \;\Bigg|\; \mathcal{F}_{t_k}, \mathbbm{1}_\tkk^{\delta}=1 \Bigg]* \pi_{t_{k+1}}(1) * P[N \in Block(j)]
\end{aligned}
\label{eq:placement_bin_value_expansion}
\end{equation}
We keep the same martingale assumption for \(S_{t_k}\) and the same ansatz for the value function:
\begin{equation}
v(t_k,S_{t_k},W_{t_k},I_{t_k}) = W_{t_k} + \alpha_{t_k} I_{t_k}^2 + S_{t_k}I_{t_k} + h_{t_k}I_{t_k} + g_{t_k}
\label{eq:placement_bin_candidate}
\end{equation}
with terminal conditions \(\alpha_T=-\lambda,\;h_T=0,\;g_T=0\).

\textbf{Step 1: Bellman equation and first‑order conditions} \\
Substituting the ansatz into the Bellman equation \(v(t_k,\cdot)=\sup_{L^\pm}\E[v(t_{k+1},\cdot)|\F_{t_k}]\) and using the martingale property as before, we obtain for a fixed bin \(j\) an expression identical to the one derived in the original proof, but with \(A^\de,B^\de\) replaced by their bin‑specific versions. Summing over bins with weights \(p_j\) gives the total objective to be maximized.

Define for each bin \(j\) the conditional moments (given \(\F_{t_k}\) and that the bin is \(j\)):
\begin{equation}
\begin{aligned}
\mu_{A}^\de(j) &:= \E\bigl[A_{t_{k+1}}^\de(j) \big| \F_{t_k}, \text{bin } j\bigr],
\mu_{B}^\de(j) := \E\bigl[B_{t_{k+1}}^\de(j) \big| \F_{t_k}, \text{bin } j\bigr],\\
\mu_{A^2}^\de(j) &:= \E\bigl[(A_{t_{k+1}}^\de(j))^2 \big| \F_{t_k}, \text{bin } j\bigr],
\mu_{B^2}^\de(j):= \E\bigl[(B_{t_{k+1}}^\de(j))^2 \big| \F_{t_k}, \text{bin } j\bigr],\\
\mu_{AB}^\de(j) &:= \E\bigl[A_{t_{k+1}}^\de(j) B_{t_{k+1}}^\de(j) \big| \F_{t_k}, \text{bin } j\bigr].
\end{aligned}
\label{eq:placement_bin_conditional_moments}
\end{equation}
Within each bin the arrival indicators are deterministic (the bin specifies the number of orders, so if the bin corresponds to a positive number, then the corresponding side definitely arrives). Therefore, for a given bin \(j\) we have \(\1_{t_{k+1}}^\de = 1\) for the sides that are active. To keep the notation manageable, we assume that each bin \(j\) fully determines which sides are active; the probabilities \(\pi^\de\) and \(\pi(1,1)\) are then replaced by indicators. However, a more systematic approach is to work directly with the weighted sums.

Define the total probability of at least one arrival on side \(\delta\) as
\begin{equation}
\pi^{\delta} := \sum_{j} p_j^{\delta},
\label{eq:placement_marginal_arrival_probability}
\end{equation}
where \(p_j^{\delta}\) is the probability that bin \(j\) contains at least one order on side \(\delta\). Similarly,
\begin{equation}
\pi(1,1) := \sum_{j} p_j^{\text{both}}
\label{eq:placement_joint_arrival_probability}
\end{equation}
is the probability of simultaneous buy and sell arrivals.

The mixed moments are defined as:
\begin{equation}
\begin{aligned}
\bar{\mu}_B^{\delta} &:= \frac{1}{\pi^{\delta}}\sum_{j} p_j^{\delta} \mu_B^{\delta}(j), \quad
\bar{\mu}_{B^2}^{\delta} := \frac{1}{\pi^{\delta}}\sum_{j} p_j^{\delta} \mu_{B^2}^{\delta}(j),\\
\bar{\mu}_A^{\delta} &:= \frac{1}{\pi^{\delta}}\sum_{j} p_j^{\delta} \mu_A^{\delta}(j), \quad
\bar{\mu}_{AB}^{\delta} := \frac{1}{\pi^{\delta}}\sum_{j} p_j^{\delta} \mu_{AB}^{\delta}(j),\\
\bar{\mu}_{A^2}^{\delta} &:= \frac{1}{\pi^{\delta}}\sum_{j} p_j^{\delta} \mu_{A^2}^{\delta}(j).
\end{aligned}
\label{eq:placement_mixed_moments}
\end{equation}
where \(\pi^\de = \sum_j p_j^\de\) is the total probability of at least one arrival on side \(\de\) (note that \(\pi^\de\) may differ from \(\pi(1)\) if only one side can arrive in some bins). In practice, the bins should be defined jointly for both sides so that the probabilities \(\pi^\de\) and \(\pi(1,1)\) are naturally obtained from the bin probabilities.

We adopt the same ansatz for the value function as in the original paper:
\begin{equation}
v(t_k,S_{t_k},W_{t_k},I_{t_k}) = W_{t_k} + \alpha_{t_k} I_{t_k}^2 + S_{t_k} I_{t_k} + h_{t_k} I_{t_k} + g_{t_k},
\label{eq:placement_mixed_candidate}
\end{equation}
with terminal conditions \(\alpha_T = -\lambda\), \(h_T = 0\), \(g_T = 0\).
Using the dynamics
\begin{equation}
W_{t_{k+1}} = W_{t_k} + \sum_{\delta=\pm} (S_{t_k} + \delta L_{t_k}^{\delta}) \, Q_{t_{k+1}}^{\delta},\qquad
I_{t_{k+1}} = I_{t_k} - \sum_{\delta=\pm} \delta Q_{t_{k+1}}^{\delta},
\label{eq:placement_bin_dynamics}
\end{equation}
where \(Q_{t_{k+1}}^{\delta} = A_{t_{k+1}}^{\delta}(j) + B_{t_{k+1}}^{\delta}(j) L_{t_k}^{\delta}\) when the bin is \(j\), the Bellman equation becomes
\begin{equation}
v(t_k,\cdot) = \sup_{L_{t_k}^{\pm}} \sum_{j} p_j \, \E\Bigl[ v(t_{k+1}, S_{t_{k+1}}, W_{t_{k+1}}, I_{t_{k+1}}) \,\big|\, \F_{t_k}, \text{bin } j \Bigr].
\label{eq:placement_bin_bellman}
\end{equation}
Substituting the ansatz and using the martingale property \(\E[S_{t_{k+1}}|\F_{t_k}] = S_{t_k}\) (which can be relaxed later), we obtain after straightforward algebra:
\begin{equation}
f(L^+,L^-) := \E\bigl[ v(t_{k+1},\cdot) \,\big|\, \F_{t_k} \bigr] - \bigl(W_{t_k} + \alpha_{t_k} I_{t_k}^2 + S_{t_k} I_{t_k} + h_{t_k} I_{t_k} + g_{t_k}\bigr)
\label{eq:placement_objective_difference}
\end{equation}
\begin{equation}
\begin{aligned}
= &\sum_{\delta=\pm} \pi_{t_{k+1}}^{\delta} \Bigl[ \bigl(\alpha_{t_{k+1}} \bar{\mu}_{B^2}^{\delta} + \bar{\mu}_B^{\delta}\bigr) (L^{\delta})^2
+ \bigl( \bar{\mu}_A^{\delta} - \delta h_{t_{k+1}} \bar{\mu}_B^{\delta} + 2\alpha_{t_{k+1}}(\delta \bar{\mu}_B^{\delta} I_{t_k} - \bar{\mu}_{AB}^{\delta}) \bigr) L^{\delta} \Bigr] \\
& - 2\alpha_{t_{k+1}} \pi_{t_{k+1}}(1,1) \bar{\mu}_B^+ \bar{\mu}_B^- L^+ L^- \\
& + \text{terms independent of } L^+,L^-.
\end{aligned}
\label{eq:placement_quadratic_expansion}
\end{equation}
This is a quadratic function in \(L^+\) and \(L^-\). Therefore the optimal controls and the backward recursions are identical in structure.

Thus the problem reduces exactly to the one solved in the original proof, with the understanding that all moments are now the mixed moments defined above. Consequently, the first‑order conditions, the expressions for the optimal controls, and the recursions for \(\al_{t_k}, h_{t_k}, g_{t_k}\) are \("identical in form\)" to those given in the original paper. We reproduce them here for completeness.

\textbf{Step 2: Optimal controls and recursions} \\
we define
\begin{equation}
\ga_{t_k} := \bigl[ \pi_{t_{k+1}}(1,1) \al_{t_{k+1}} \bar{\mu}_B^+ \bar{\mu}_B^- \bigr]^2 - \pi_{t_{k+1}}^+ \pi_{t_{k+1}}^- \bigl( \al_{t_{k+1}} \bar{\mu}_{B^2}^+ + \bar{\mu}_B^+ \bigr) \bigl( \al_{t_{k+1}} \bar{\mu}_{B^2}^- + \bar{\mu}_B^- \bigr)
\label{eq:placement_gamma}
\end{equation} 
\begin{equation}
\be_{t_k}^{\pm} := \pi_{t_{k+1}}^+ \pi_{t_{k+1}}^- \bar{\mu}_B^{\pm} \bigl( \al_{t_{k+1}} \bar{\mu}_{B^2}^{\mp} + \bar{\mu}_B^{\mp} \bigr) - \pi_{t_{k+1}}^{\mp} \pi_{t_{k+1}}(1,1) \al_{t_{k+1}} \bar{\mu}_B^{\pm} (\bar{\mu}_B^{\mp})^2.
\label{eq:placement_beta}
\end{equation} 
The first-order conditions yield:
\begin{equation}
\begin{aligned}
L_{t_k}^{+,*} &= {}^{(1)}\!A_{t_k}^+ I_{t_k} + {}^{(2)}\!A_{t_k}^+ + {}^{(3)}\!A_{t_k}^+,\\
L_{t_k}^{-,*} &= -{}^{(1)}\!A_{t_k}^- I_{t_k} - {}^{(2)}\!A_{t_k}^- + {}^{(3)}\!A_{t_k}^-,
\end{aligned}
\label{eq:placement_optimal_quotes}
\end{equation}
where
\begin{equation}
\begin{aligned}
{}^{(1)}\!A_{t_k}^{\pm} &= \frac{\be_{t_k}^{\pm} \al_{t_{k+1}}}{\ga_{t_k}},\\
{}^{(2)}\!A_{t_k}^{\pm} &= \frac{\be_{t_k}^{\pm} h_{t_{k+1}}}{2\ga_{t_k}},\\
{}^{(3)}\!A_{t_k}^{\pm} &= \frac{\pi_{t_{k+1}}^{\mp}}{2\ga_{t_k}} \bigl( \al_{t_{k+1}} \bar{\mu}_{B^2}^{\mp} + \bar{\mu}_B^{\mp} \bigr) \Bigl[ \pi_{t_{k+1}}^{\pm} \bigl( \bar{\mu}_A^{\pm} + 2\al_{t_{k+1}} \bar{\mu}_{AB}^{\pm} \bigr) - 2\al_{t_{k+1}} \pi_{t_{k+1}}(1,1) \bar{\mu}_B^{\pm} \bar{\mu}_A^{\mp} \Bigr] \\
&\quad + \pi_{t_{k+1}}(1,1) \frac{\al_{t_{k+1}}}{2\ga_{t_k}} \bar{\mu}_B^+ \bar{\mu}_B^- \Bigl[ \pi_{t_{k+1}}^{\mp} \bigl( \bar{\mu}_A^{\mp} + 2\al_{t_{k+1}} \bar{\mu}_{AB}^{\mp} \bigr) - 2\al_{t_{k+1}} \pi_{t_{k+1}}(1,1) \bar{\mu}_B^{\mp} \bar{\mu}_A^{\pm} \Bigr].
\end{aligned}
\label{eq:placement_quote_coefficients}
\end{equation}
Matching powers of \(I_{t_k}\) in the Bellman equation gives the recursions for \(\alpha_{t_k}\), \(h_{t_k}\), and \(g_{t_k}\):
\begin{equation}
\boxed{
\begin{aligned}
\al_{t_k} &= \al_{t_{k+1}} + \sum_{\delta=\pm} \pi_{t_{k+1}}^{\delta} \Bigl[ \bigl( \al_{t_{k+1}} \bar{\mu}_{B^2}^{\delta} + \bar{\mu}_B^{\delta} \bigr) \bigl( {}^{(1)}\!A_{t_k}^{\delta} \bigr)^2 + 2\al_{t_{k+1}} \bar{\mu}_B^{\delta} {}^{(1)}\!A_{t_k}^{\delta} \Bigr] \\
&\qquad + 2\al_{t_{k+1}} \pi_{t_{k+1}}(1,1) \bar{\mu}_B^+ \bar{\mu}_B^- \, {}^{(1)}\!A_{t_k}^+ {}^{(1)}\!A_{t_k}^-,
\end{aligned}
}
\label{eq:placement_alpha_recursion}
\end{equation}

\begin{equation}
\boxed{
\begin{aligned}
h_{t_k} &= h_{t_{k+1}} + \sum_{\delta=\pm} \pi_{t_{k+1}}^{\delta} \Bigg\{
2\bigl( \al_{t_{k+1}} \bar{\mu}_{B^2}^{\delta} + \bar{\mu}_B^{\delta} \bigr) \Bigl[ {}^{(1)}\!A_{t_k}^{\delta} \bigl( \delta {}^{(3)}\!A_{t_k}^{\delta} + {}^{(2)}\!A_{t_k}^{\delta} \bigr) \Bigr] \\
&\qquad + 2\al_{t_{k+1}} \bar{\mu}_B^{\delta} \bigl( \delta {}^{(3)}\!A_{t_k}^{\delta} + {}^{(2)}\!A_{t_k}^{\delta} \bigr) - 2\al_{t_{k+1}} \delta \bar{\mu}_{AB}^{\delta} \\
&\qquad + \delta {}^{(1)}\!A_{t_k}^{\delta} \Bigl( \bar{\mu}_A^{\delta} - \delta h_{t_{k+1}} \bar{\mu}_B^{\delta} + 2\al_{t_{k+1}} \bar{\mu}_{AB}^{\delta} \Bigr)
\Bigg\} \\
&\qquad - 2\al_{t_{k+1}} \pi_{t_{k+1}}(1,1) \bar{\mu}_B^+ \bar{\mu}_B^- \Bigg[
{}^{(1)}\!A_{t_k}^+ \bigl( {}^{(3)}\!A_{t_k}^- - {}^{(2)}\!A_{t_k}^- \bigr) - {}^{(1)}\!A_{t_k}^- \bigl( {}^{(2)}\!A_{t_k}^+ + {}^{(3)}\!A_{t_k}^+ \bigr) \\
&\qquad\qquad - \frac{\bar{\mu}_A^+}{\bar{\mu}_B^+} {}^{(1)}\!A_{t_k}^- + \frac{\bar{\mu}_A^-}{\bar{\mu}_B^-} {}^{(1)}\!A_{t_k}^+
\Bigg],
\end{aligned}
}
\label{eq:placement_h_recursion}
\end{equation}

\begin{equation}
\boxed{
\begin{aligned}
g_{t_k} &= g_{t_{k+1}} + \sum_{\delta=\pm} \pi_{t_{k+1}}^{\delta} \Bigg\{
\bigl( \al_{t_{k+1}} \bar{\mu}_{B^2}^{\delta} + \bar{\mu}_B^{\delta} \bigr) \bigl( {}^{(3)}\!A_{t_k}^{\delta} + \delta {}^{(2)}\!A_{t_k}^{\delta} \bigr)^2 + \al_{t_{k+1}} \bar{\mu}_{A^2}^{\delta} \\
&\qquad - \delta h_{t_{k+1}} \bar{\mu}_A^{\delta}
+ \Bigl( \bar{\mu}_A^{\delta} - \delta h_{t_{k+1}} \bar{\mu}_B^{\delta} + 2\al_{t_{k+1}} \bar{\mu}_{AB}^{\delta} \Bigr)
\bigl( {}^{(3)}\!A_{t_k}^{\delta} + \delta {}^{(2)}\!A_{t_k}^{\delta} \bigr)
\Bigg\} \\
&\qquad - 2\al_{t_{k+1}} \pi_{t_{k+1}}(1,1) \bar{\mu}_B^+ \bar{\mu}_B^- \Bigg[
\bigl( {}^{(2)}\!A_{t_k}^+ + {}^{(3)}\!A_{t_k}^+ \bigr) \bigl( {}^{(3)}\!A_{t_k}^- - {}^{(2)}\!A_{t_k}^- \bigr) \\
&\qquad\qquad + \frac{\bar{\mu}_A^+}{\bar{\mu}_B^+} \bigl( {}^{(3)}\!A_{t_k}^- - {}^{(2)}\!A_{t_k}^- \bigr)
+ \frac{\bar{\mu}_A^-}{\bar{\mu}_B^-} \bigl( {}^{(2)}\!A_{t_k}^+ + {}^{(3)}\!A_{t_k}^+ \bigr)
+ \frac{\bar{\mu}_A^+ \bar{\mu}_A^-}{\bar{\mu}_B^+ \bar{\mu}_B^-}
\Bigg].
\end{aligned}
}
\label{eq:placement_g_recursion}
\end{equation}
These recursions, together with the terminal conditions, fully characterize the optimal market making strategy under the binned arrival model.
\end{proof}


\section{Parameters}

Estimating the base intensity vector $\nu$ and the excitation kernel $k$ of an
ordinary Hawkes process has attracted considerable attention in the statistics
and financial econometrics literature. While recent developments have focused
on non-parametric approaches~\cite{bacry2016first, kirchner2017estimation,
eichler2017graphical, sancetta2018estimation, achab2018analysis}, the present
paper adopts the classical parametric formulation  and extends it to
state-dependent Hawkes processes. In the sequel we consider a parametric kernel
$k = k_{\theta}$ with $\theta \in \mathbb{R}^p$, and defer the non-parametric
setting to future work.

We use the classical Maximum Likelihood to estimate the parameters. The log-likelihood function for a state-dependent Hawkes process with parameters
$(\phi, \nu, \theta)$ factorises as
\begin{equation}
\ln L(\phi, \nu, \theta)
    = \sum_{n=1}^{N} \ln \phi_{e_n}(x_{n-1}, x_n)
      + \left( \sum_{n=1}^{N} \ln \lambda_{e_n}(t_n)
             - \int_{0}^{T} \sum_{e \in \mathcal{E}} \lambda_e(t) \, dt \right),
\label{eq:loglik}
\end{equation}

where the intensity of event type $e$ is
\begin{equation}
\lambda_e(t) = \nu_e + \sum_{e' \in \mathcal{E}} \int_{(0,t)} k_{e'e}(t-s,\, X(s)) \, dN_{e'}(s),
\label{eq:intensity}
\end{equation}
and the excitation kernel $k_{e'e}(t,x)$ is parametrised in the exponential form
\begin{equation}
k_{e'e}(t,x) = \alpha_{e' x e} \, e^{-\beta_{e' x e}\, t},
\qquad \alpha_{e' x e},\, \beta_{e' x e} \ge 0.
\label{eq:exponential_kernel}
\end{equation}

The factorisation in~\eqref{eq:loglik} decouples the transition parameters
$\phi$ from the intensity parameters $(\nu, \theta)$, so the two blocks are
estimated in sequence. For the transition probabilities, maximising the first
term of~\eqref{eq:loglik} under the constraint that each $\phi_e$ is a
stochastic matrix yields the empirical transition-count estimator
\begin{equation}
\hat{\phi}_e(x, x')
    = \frac{ \#\{ n : e_n = e,\; x_{n-1} = x,\; x_n = x' \} }
           { \#\{ n : e_n = e,\; x_{n-1} = x \} }.
\label{eq:phi_hat}
\end{equation}
The intensity parameters $(\nu, \theta)$ are then obtained by maximising the
second term of~\eqref{eq:loglik}; for the exponential kernel of
\eqref{eq:exponential_kernel} the integral term admits a closed-form expression
that reduces the optimisation to a finite-dimensional problem in the event
times and states, and the log-intensity sum can be evaluated by the standard
recursion of Appendix~A.2 of~\cite{morariu2022state}:

\begin{equation}
(\hat{\nu}, \hat{\theta}) = \arg\max_{\nu,\theta} \left( \sum_{n=1}^{N} \ln \lambda_{e_n}(t_n) - \int_0^T \sum_{e \in \mathcal{E}} \lambda_e(t) \, dt \right).
\end{equation}
For exponential kernels \eqref{eq:exponential_kernel}, the integral term can be explicitly computed as

\begin{equation}
\int_0^T \sum_{e \in \mathcal{E}} \lambda_e(t) \, dt = \sum_{e \in \mathcal{E}} \nu_e T + \sum_{e' \in \mathcal{E}} \sum_{x \in \mathcal{X}} \sum_{e \in \mathcal{E}} \frac{\alpha_{e' x e}}{\beta_{e' x e}} \, N_{e' x}(T),
\end{equation}
where \(N_{e' x}(T)\) is the total number of events of type \(e'\) that occurred in state \(x\) up to time \(T\).

\section{Numerical Validation}


\subsection{Data Profile}
\label{sec:data}

We calibrate and test the model using high-frequency LOB
data of MSFT, recorded across five consecutive
trading days from November~3 to November~7, 2014.
Following standard practice in the high-frequency trading literature
\citep{chavez2024adaptive,capponi2021market},
we restrict attention to regular trading hours from 9:30\,a.m.\ to
4:00\,p.m.\ Eastern Time, yielding $23{,}400$ seconds of trading per day
and $117{,}000$ one-second intervals in total across the five-day sample.

\paragraph{Order classification.}
Each event is classified into one of four order types that drive the
state-dependent Hawkes process:
\begin{enumerate}
  \item \emph{Market Buy (MB):} a marketable order that consumes liquidity on
        the ask side (message types 69 and 70 with \texttt{BidSideFlag}~$= 0$);
  \item \emph{Market Sell (MS):} a marketable order that consumes liquidity on
        the bid side (message types 69 and 70 with \texttt{BidSideFlag}~$= 1$);
  \item \emph{Limit Buy (LB):} a passive order posted to the bid side
        (message type~66);
  \item \emph{Limit Sell (LS):} a passive order posted to the ask side
        (message type~83).
\end{enumerate}
Table~\ref{tab:data_summary} reports aggregate statistics for the
five-day sample.

\begin{table}[ht]
\centering
\caption{Summary statistics for MSFT LOB data,
         November~3--7, 2014 (five trading days).}
\label{tab:data_summary}
\begin{tabular}{lrr}
\toprule
Statistic & Buy Side & Sell Side \\
\midrule
Total events (all types)         & \multicolumn{2}{c}{5{,}351{,}271} \\
Market orders (MB $+$ MS)        & ${\approx}74{,}177$ & ${\approx}56{,}110$ \\
\quad Total market orders         & \multicolumn{2}{c}{130{,}287} \\
Limit orders (LB $+$ LS)         & \multicolumn{2}{c}{2{,}651{,}483} \\
\midrule
Avg.\ market order size (shares) & 203.63 & 206.87 \\
Overall avg.\ MO size (shares)   & \multicolumn{2}{c}{205.02} \\
\midrule
\multicolumn{3}{l}{\textit{Per-interval statistics ($\Delta t = 1$ s)}} \\
Total 1-s intervals              & \multicolumn{2}{c}{117{,}000} \\
Intervals with zero MOs          & 105{,}795\ (90.42\%) & 106{,}245\ (90.81\%) \\
Intervals with $\geq 1$ MO       & 11{,}205\ (9.58\%)  & 10{,}755\ (9.19\%) \\
Avg.\ MOs $|$ interval $\geq 1$  & 6.62 & 5.22 \\
Overall avg.\ MOs per interval   & 0.6340 & 0.4796 \\
\bottomrule
\end{tabular}
\end{table}

\paragraph{Market order arrival profile.}
A salient feature of the data is the high degree of
\emph{zero-inflation}: across all physical time steps,
the majority of intervals contain no market orders on either side.
Figure~\ref{fig:mo_dist_1s} displays the empirical frequency distribution
of per-interval market order counts $N$ for $\De t = 1$\,s, restricted
to intervals with $N \geq 1$.
The distribution is heavily right-skewed, with the bulk of active intervals
containing between one and ten orders.
Figure~\ref{fig:zero_inflation} shows how the zero-fraction decreases as
$\De t$ grows: from approximately 90\% at $\De t = 1$\,s to roughly 77\%
at $\De t = 3$\,s, reflecting the increasing probability of observing at
least one market order in a longer window.

\begin{figure}[ht]
  \centering
  \includegraphics[width=0.82\textwidth]{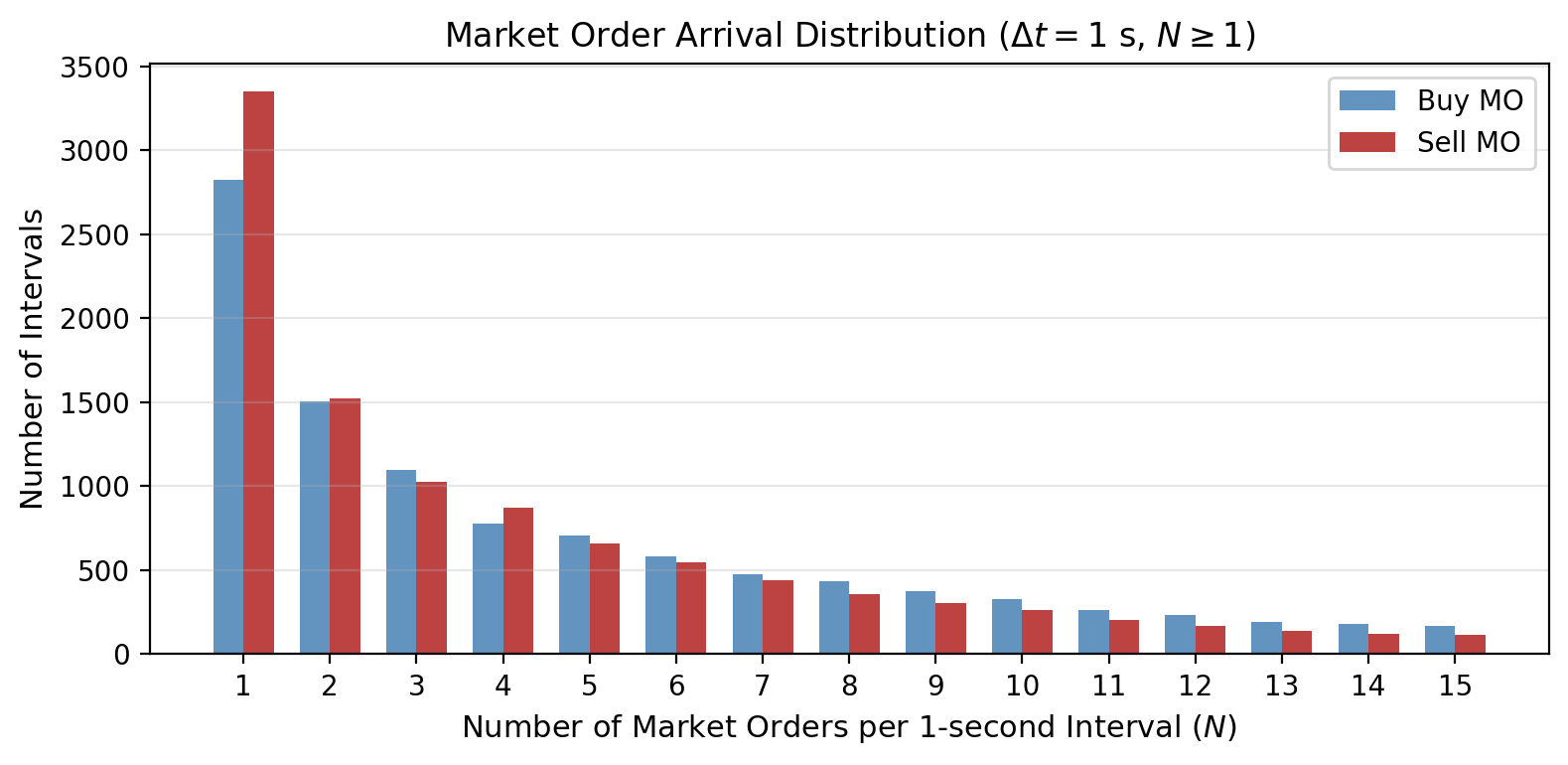}
  \caption{Empirical frequency distribution of per-interval market order
    arrivals for $\Delta t = 1$\,s, restricted to intervals containing at
    least one market order ($N \geq 1$, shown up to $N = 15$).
    Buy and sell sides are displayed separately.
    Intervals with $N = 0$ account for 90.4\% (buy) and 90.8\% (sell)
    of all $117{,}000$ one-second intervals.}
  \label{fig:mo_dist_1s}
\end{figure}

\begin{figure}[ht]
  \centering
  \includegraphics[width=0.75\textwidth]{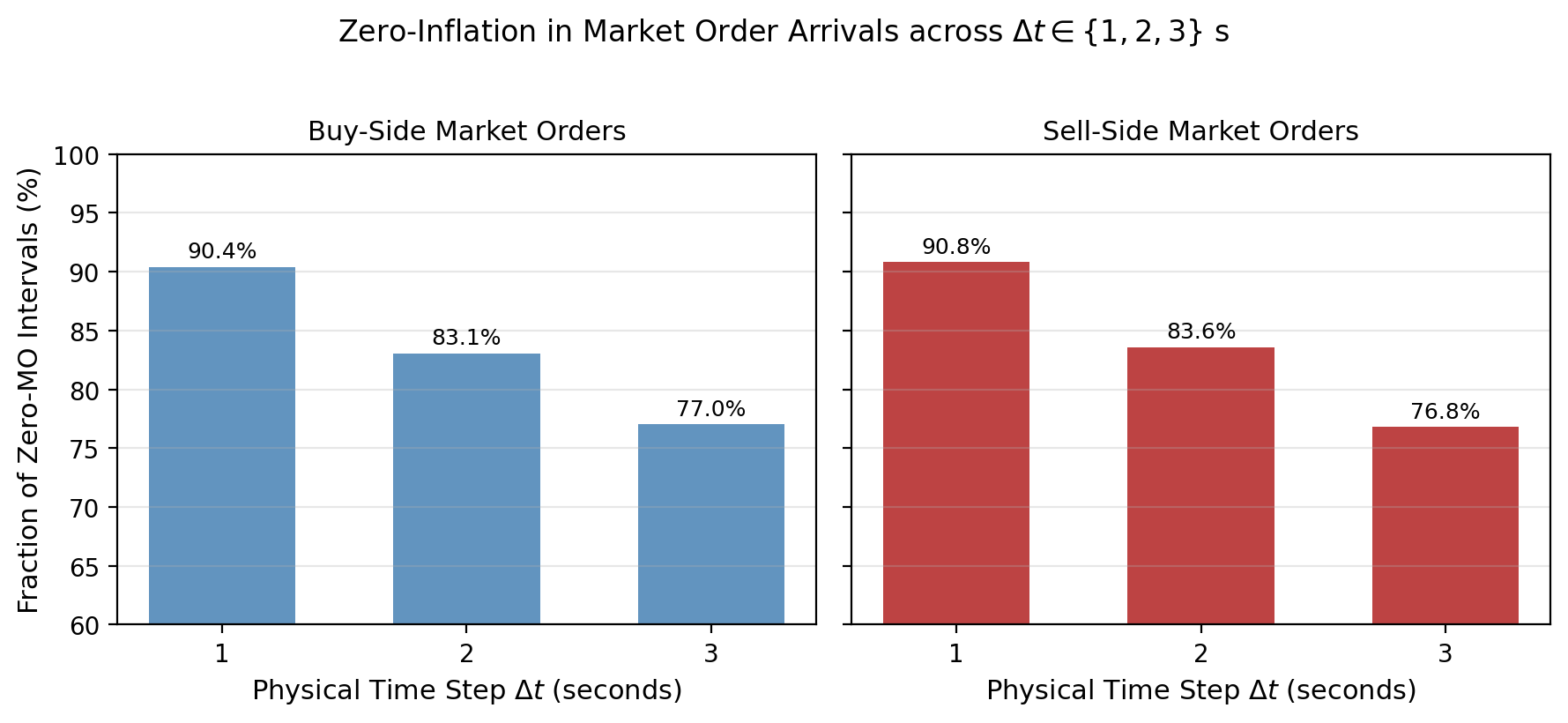}
  \caption{Fraction of intervals containing zero market orders
    as a function of the physical time step $\Delta t \in \{1,2,3\}$\,s,
    for the buy side (left) and sell side (right).
    The zero fraction decreases monotonically with $\Delta t$,
    from approximately 90\% at $\Delta t = 1$\,s to 77\% at $\Delta t = 3$\,s.}
  \label{fig:zero_inflation}
\end{figure}

\paragraph{Block grouping.}
Figure~\ref{fig:block_probs} reports the empirical probabilities of each
block. The dominant block is $N = 0$, which accounts for over 90\% of all intervals.
The remaining probability mass is concentrated in the $[1,10]$ block
(approximately 7.8\% for buy and 8.0\% for sell),
with the higher-activity blocks each contributing less than 2\%.

\begin{figure}[ht]
  \centering
  \includegraphics[width=0.88\textwidth]{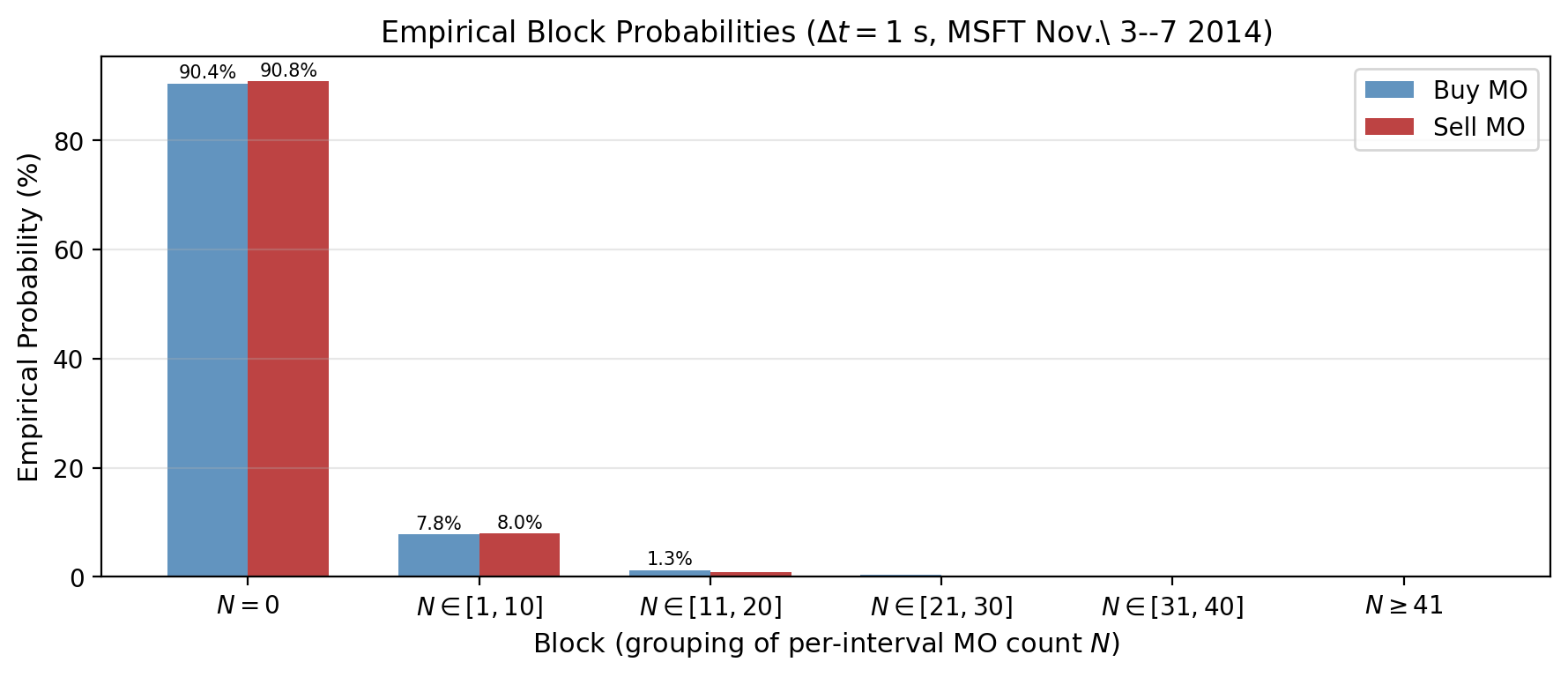}
  \caption{Empirical block probabilities for the grouping of per-interval
    market order counts ($\De t = 1$\,s, MSFT, November~3--7, 2014).
    Blocks are $\{0\}$, $[1,10]$, $[11,20]$, $[21,30]$, $[31,40]$, and
    $[41,\infty)$; buy and sell sides are displayed side by side.}
  \label{fig:block_probs}
\end{figure}

\paragraph{State variables.}
The state process $X(\tk)$ entering the intensity function of the
state-dependent Hawkes model is constructed from three LOB features
observed at each physical time step $\tk$:
\begin{enumerate}
  \item \emph{Queue imbalance:} the ratio
        $\displaystyle\frac{V^{\mathrm{bid}} - V^{\mathrm{ask}}}
                            {V^{\mathrm{bid}} + V^{\mathrm{ask}}}$,
        where $V^{\mathrm{bid}}$ and $V^{\mathrm{ask}}$ denote the
        total volume at the best bid and best ask, respectively;
\end{enumerate}
This feature is computed from the snapshot of the LOB at the
end of each physical time interval and serves as the conditioning
state for the sdHawkes intensity in the subsequent sections.

\subsection{Demand Function Results}
\label{sec:demand_results}

We now present the empirical estimation of the parametric market order demand function. The coefficients $a_N^\pm$ (intercept) and $b_N^\pm$ (slope) are estimated separately for each group of $N$, as detailed below.

\subsubsection{Market Order Frequency Distribution}

For each second interval, we count the number of buy and sell MO separately.  Table~\ref{tab:mo_freq} summarizes the frequency distribution of MO counts per second.  Approximately $90.4\%$ of one-second intervals contain no buy MO, and a similar proportion holds for sell MOs.  This high proportion of empty intervals is expected at the one-second resolution and motivates the use of a point process framework to model the sporadic and self-exciting nature of order arrivals.

\begin{table}[htbp]
\centering
\caption{Market order count frequency distribution per one-second interval (MSFT, Nov~3--7, 2014).  Total intervals: $117{,}000$.}
\label{tab:mo_freq}
\begin{tabular}{@{}rrcrrc@{}}
\toprule
\multicolumn{3}{c}{\textbf{Buy Orders}} & \multicolumn{3}{c}{\textbf{Sell Orders}} \\
\cmidrule(lr){1-3} \cmidrule(lr){4-6}
$N$ & Freq.\ & \%      & $N$ & Freq.\ & \% \\
\midrule
0      & 105,795 & 90.42\% & 0      & 105,184 & 89.90\% \\
1      & 2,826   & 2.42\%  & 1      & 2,874   & 2.46\%  \\
2      & 1,504   & 1.29\%  & 2      & 1,587   & 1.36\%  \\
3      & 1,096   & 0.94\%  & 3      & 1,113   & 0.95\%  \\
4      & 777     & 0.66\%  & 4      & 788     & 0.67\%  \\
5      & 708     & 0.61\%  & 5      & 731     & 0.62\%  \\
6--10  & 2,612   & 2.23\%  & 6--10  & 2,891   & 2.47\%  \\
11--20 & 1,487   & 1.27\%  & 11--20 & 1,803   & 1.54\%  \\
21--30 & 404     & 0.35\%  & 21--30 & 502     & 0.43\%  \\
31--40 & 121     & 0.10\%  & 31--40 & 127     & 0.11\%  \\
41+    & 90      & 0.08\%  & 41+    & 95      & 0.08\%  \\
\midrule
Total (nonzero) & 11,205 & 9.58\% & Total (nonzero) & 11,816 & 10.10\% \\
\bottomrule
\end{tabular}
\end{table}

\subsubsection{Execution Probability Estimation}

For each interval in which the MO count $N$ falls within a given block, we compute the execution probability of a hypothetical limit order posted at each level of the limit order book (LOB).  We adopt a \emph{front-of-queue priority} convention: the first MO arriving in the interval is matched against the available liquidity at each price level, starting from Level~1 (or the best bid/ask).  The execution probability at level $\ell$ is the probability that the total marketable share volume in that interval is sufficient to consume all liquidity at levels $1, \dots, \ell-1$ and reach level $\ell$, given our simulated order size of $\text{AVOS}=205$ shares.

Formally, for an interval with $N$ buy MOs arriving on the ask side, let $V_{\text{total}}$ be the total share volume of those $N$ orders, and let $D_\ell$ denote the depth (shares) at sell-side LOB level $\ell$.  The execution indicator at level $\ell$ is
\[
\mathbbm{1}_{\text{exec}}^\ell = \mathbbm{1}\!\left(V_{\text{total}} - \sum_{i=1}^{\ell-1} D_i > 0\right),
\]
and the fill proportion is $\min\!\bigl(1,\; (V_{\text{total}} - \sum_{i=1}^{\ell-1} D_i) / \text{AVOS}\bigr)$.  The execution probability at level $\ell$ for a given block $j$ is the average of these fill proportions over all intervals belonging to that block.  We consider the first five LOB levels ($\ell = 1, \dots, 5$).

Intervals with $N = 0$ are excluded from this analysis, since the demand function $f_N^\pm(L)$ is defined only conditional on the arrival of at least one MO.  The probability $P[N=0]$ is accounted for separately in the Bellman recursion of Section~4 via the arrival indicator $\mathbbm{1}_{t_{k+1}}^\pm$.

Table~\ref{tab:exec_prob} reports the estimated execution probability matrices for buy and sell orders, aggregated by block.  Figure~\ref{fig:exec_heatmap} displays the same information in heatmap form, and Figure~\ref{fig:exec_curves} shows the execution probability curves as functions of the LOB level.

\begin{table}[htbp]
\centering
\caption{Execution probability matrices by market order count group (five LOB levels, $\De t = 1$s).  Each entry reports the average probability that a limit order of size $\text{AVOS}=205$ shares posted at that level would be fully filled, conditional on the indicated block of $N$ MOs arriving in that interval.}
\label{tab:exec_prob}
\begin{tabular}{@{}lrrrrr@{}}
\toprule
\multicolumn{6}{c}{\textbf{Buy Orders}} \\
\cmidrule(lr){1-6}
Group ($N$) & Level 1 & Level 2 & Level 3 & Level 4 & Level 5 \\
\midrule
 1--10 & 0.787 & 0.062 & 0.005 & 0.002 & 0.001 \\
11--20 & 0.918 & 0.084 & 0.009 & 0.004 & 0.002 \\
21--30 & 0.939 & 0.121 & 0.017 & 0.006 & 0.004 \\
31--40 & 0.959 & 0.160 & 0.029 & 0.013 & 0.004 \\
  40+  & 0.942 & 0.169 & 0.043 & 0.017 & 0.008 \\
\bottomrule
\end{tabular}
\quad
\begin{tabular}{@{}lrrrrr@{}}
\toprule
\multicolumn{6}{c}{\textbf{Sell Orders}} \\
\cmidrule(lr){1-6}
Group ($N$) & Level 1 & Level 2 & Level 3 & Level 4 & Level 5 \\
\midrule
 1--10 & 0.817 & 0.067 & 0.006 & 0.002 & 0.001 \\
11--20 & 1.000 & 0.326 & 0.045 & 0.012 & 0.007 \\
21--30 & 1.000 & 0.550 & 0.116 & 0.043 & 0.013 \\
31--40 & 1.000 & 0.755 & 0.302 & 0.132 & 0.057 \\
  40+  & 1.000 & 0.893 & 0.429 & 0.286 & 0.071 \\
\bottomrule
\end{tabular}
\end{table}

Several observations are immediate from Table~\ref{tab:exec_prob}:
\begin{enumerate}
    \item \textbf{Monotonicity in level.}  For every block, the execution probability strictly decreases with the LOB level.  This confirms the basic economic intuition: limit orders posted farther from the mid-price (deeper in the book) are less likely to be executed.
    \item \textbf{Monotonicity in $N$.}  For a fixed level, the execution probability increases with the number of arriving MOs.  This is expected, since more market orders imply greater aggregate marketable volume, increasing the likelihood of penetrating deeper levels.
    \item \textbf{Saturation at Level~1.}  For blocks with $N \ge 11$, the Level~1 execution probability for sell-side orders reaches $1.00$, meaning that intervals with more than ten MOs almost always consume at least 205 shares at the best price on the bid side.  The buy side shows a similar but slightly lower saturation ($\approx 0.92$--$0.96$ for $N \ge 11$).
    \item \textbf{Symmetry.}  The buy-side and sell-side execution probabilities are broadly symmetric, with sell-side probabilities slightly higher at deeper levels, reflecting the small asymmetry in average order sizes (buy: 203.63 shares, sell: 206.87 shares; see Section~5.1).
\end{enumerate}

\begin{figure}[htbp]
    \centering
    \begin{subfigure}[b]{0.48\textwidth}
        \centering
        \includegraphics[width=\textwidth]{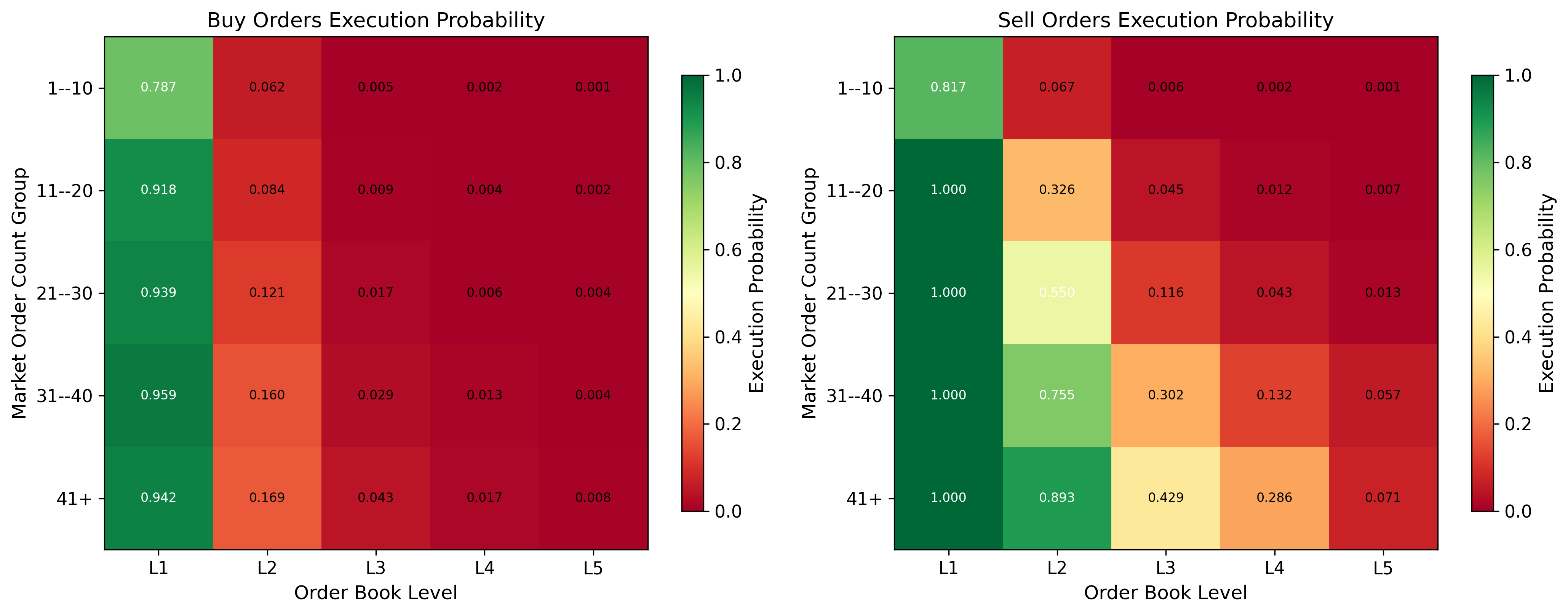}
        \caption{Execution probability heatmaps}
        \label{fig:exec_heatmap}
    \end{subfigure}
    \hfill
    \begin{subfigure}[b]{0.48\textwidth}
        \centering
        \includegraphics[width=\textwidth]{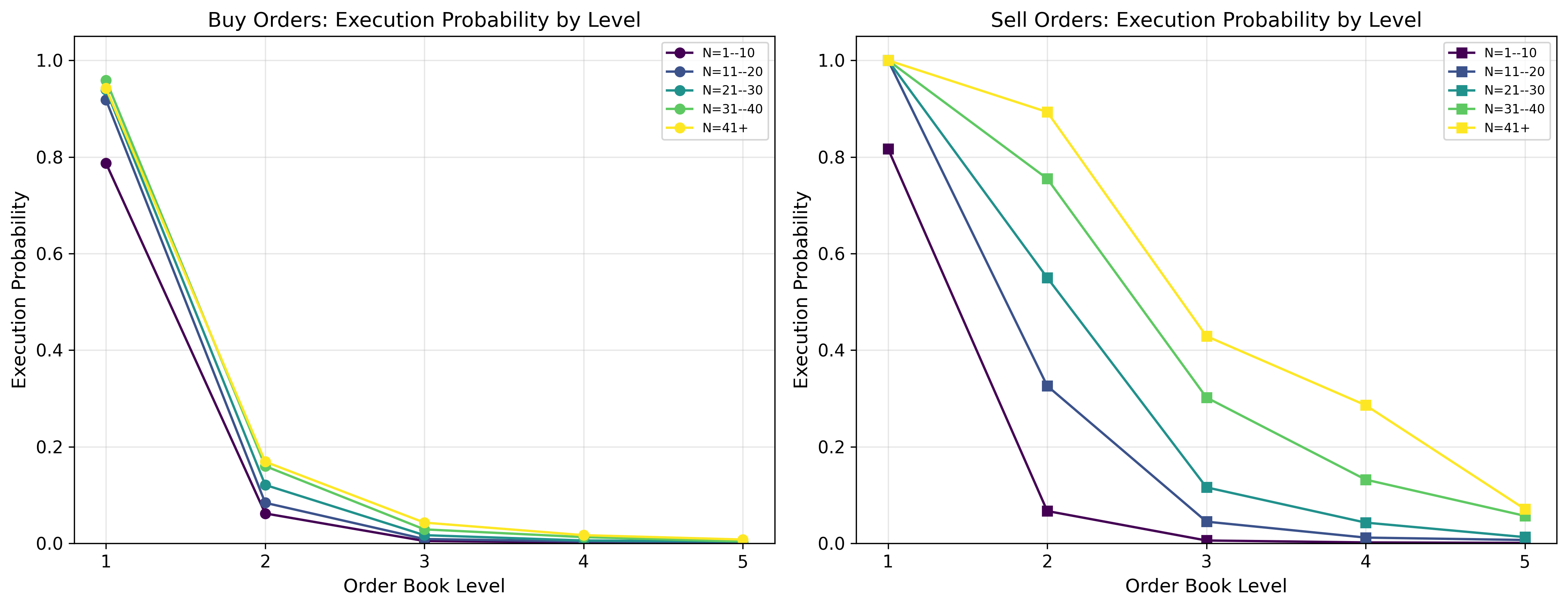}
        \caption{Execution probability curves by group}
        \label{fig:exec_curves}
    \end{subfigure}
    \caption{Execution probability analysis.  (a)~Heatmaps of execution probability by MO count group (rows) and LOB level (columns).  (b)~Execution probability as a function of LOB level for each group.}
    \label{fig:exec_prob_all}
\end{figure}

\subsubsection{Linear Demand Function Estimation}

For each block $j$ and each trading interval, we fit a linear regression of the form $p(\ell) = a + b \times \ell$ to the five execution probabilities (at levels $\ell = 1, \dots, 5$). The regression is effectively an ordinary least-squares fit over the five data points $(\ell, p_\ell)$ for each interval. We collect the fitted intercepts $a_N^\pm$ and slopes $b_N^\pm$ for each block $j$, and compute their means and standard deviations.  Table~\ref{tab:linear_params} reports the results.  Figures~\ref{fig:intercept_comp} and~\ref{fig:slope_comp} visualize the parameter estimates across groups.

\begin{table}[htbp]
\centering
\caption{Linear demand function parameter estimates by MO count group ($\De t = 1$s).  $a_N^\pm$ is the intercept, $b_N^\pm$ is the slope.  Standard deviations in parentheses reflect cross-interval variability within each group.}
\label{tab:linear_params}
\begin{tabular}{@{}lrrrrrr@{}}
\toprule
& \multicolumn{3}{c}{\textbf{Buy Side ($+$)}} & \multicolumn{3}{c}{\textbf{Sell Side ($-$)}} \\
\cmidrule(lr){2-4} \cmidrule(lr){5-7}
Group ($N$) & Intervals & $a_N^+$ & $b_N^+$ & Intervals & $a_N^-$ & $b_N^-$ \\
\midrule
 1--10 & 9,103 &  0.707 (0.260) & $-0.175$ (0.061) & 9,339 &  0.688 (0.268) & $-0.170$ (0.063) \\
11--20 & 1,487 &  0.956 (0.235) & $-0.229$ (0.049) & 1,086 &  0.968 (0.243) & $-0.230$ (0.050) \\
21--30 &   404 &  1.072 (0.260) & $-0.250$ (0.052) &   235 &  1.089 (0.267) & $-0.248$ (0.057) \\
31--40 &   121 &  1.115 (0.261) & $-0.255$ (0.054) &    53 &  1.202 (0.254) & $-0.251$ (0.077) \\
  40+  &    90 &  1.142 (0.267) & $-0.225$ (0.099) &    42 &  1.275 (0.198) & $-0.246$ (0.082) \\
\bottomrule
\end{tabular}
\end{table}

\begin{figure}[htbp]
    \centering
    \begin{subfigure}[b]{0.48\textwidth}
        \centering
        \includegraphics[width=\textwidth]{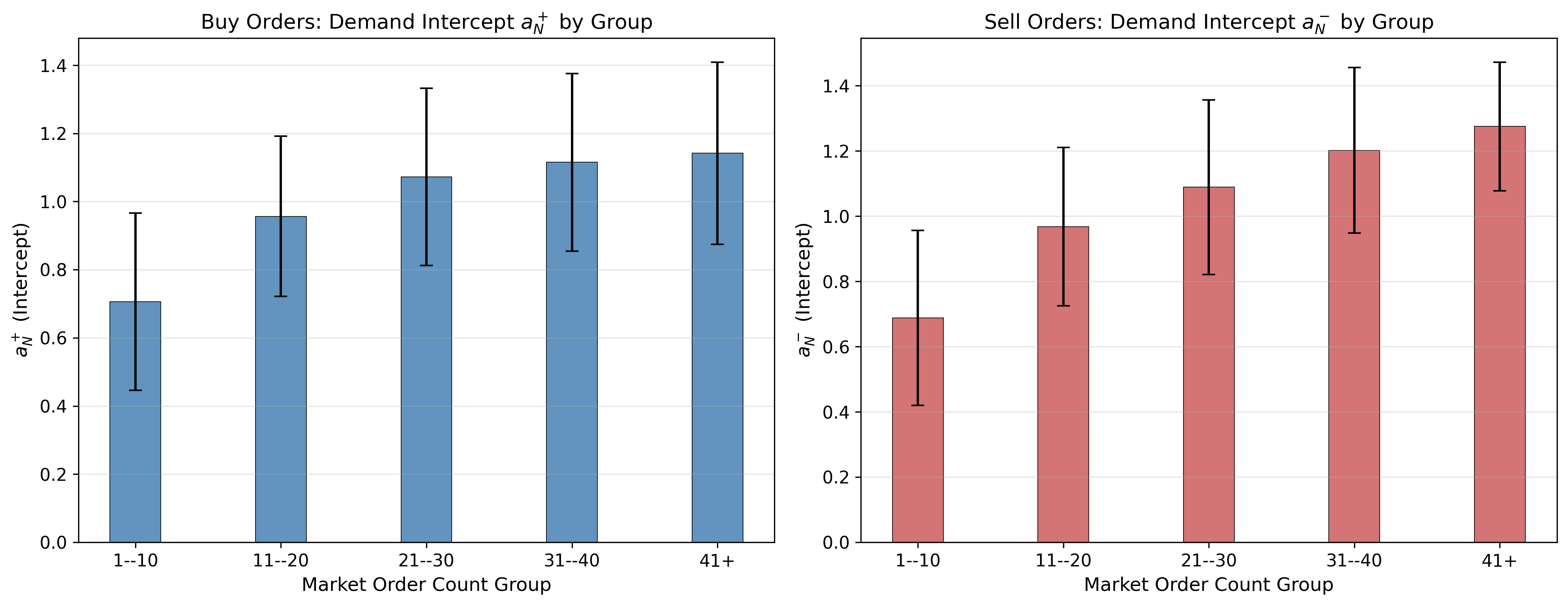}
        \caption{Intercept $a_N^\pm$ by group}
        \label{fig:intercept_comp}
    \end{subfigure}
    \hfill
    \begin{subfigure}[b]{0.48\textwidth}
        \centering
        \includegraphics[width=\textwidth]{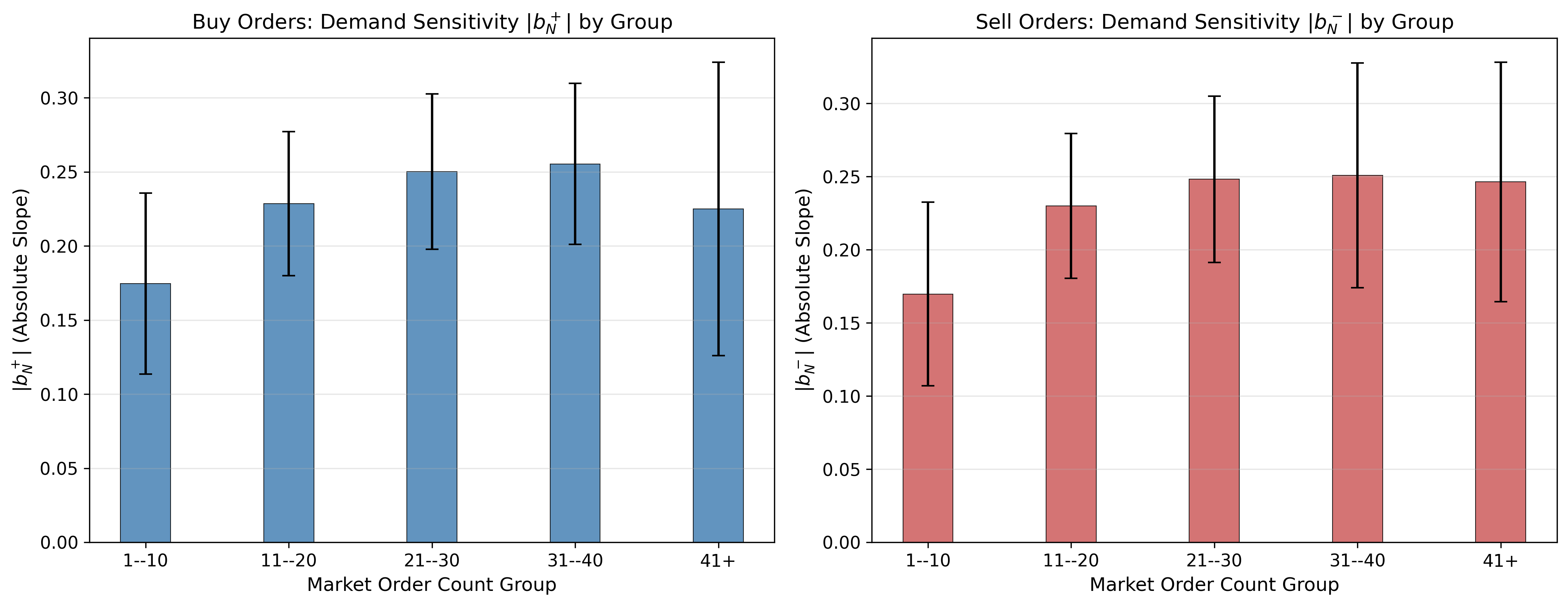}
        \caption{Absolute slope $|b_N^\pm|$ by group}
        \label{fig:slope_comp}
    \end{subfigure}
    \caption{Estimated linear demand function parameters by MO count group.  (a)~Intercept $a_N^\pm$: higher values indicate higher baseline execution probability at Level~1.  (b)~Absolute slope $|b_N^\pm|$: higher values indicate steeper decay of execution probability with level (i.e., more price-sensitive demand).  Error bars denote $\pm 1$ standard deviation.}
    \label{fig:params_all}
\end{figure}

The following patterns emerge from Table~\ref{tab:linear_params}:

\begin{enumerate}
    \item \textbf{Increasing intercept with $N$.}  The intercept $a_N^\pm$ rises monotonically with the MO count.  For buy orders, $a_N^+$ increases from $0.707$ (block $1$--$10$) to $1.142$ (block $40{+}$).  The sell-side intercepts exhibit the same trend and are consistently slightly higher than their buy-side counterparts, consistent with the marginally larger average sell order size.
    \item \textbf{Negative slopes throughout.}  All estimated slopes $b_N^\pm$ are negative, confirming that the execution probability decreases with the level $L$ (i.e., with distance from the mid-price).  This is a standard feature of LOB demand: market participants are less willing to trade at less favorable prices.
    \item \textbf{Increasing sensitivity with $N$.}  The absolute slope $|b_N^\pm|$ generally increases from the $1$--$10$ block to the $31$--$40$ block (e.g., buy side: from $0.175$ to $0.255$; sell side: from $0.170$ to $0.251$), indicating that demand becomes more price-sensitive when more MOs arrive.  In the $40{+}$ block, the slope estimate shows slightly higher variability, likely due to the smaller sample size in that group ($n \le 90$ intervals).
    \item \textbf{Cross-side symmetry.}  The buy-side and sell-side parameter estimates are remarkably similar.  For instance, the buy-side slope for the $1$--$10$ block is $-0.175$ while the sell-side counterpart is $-0.170$.  This symmetry is an important structural property that simplifies the Bellman recursion in Section~4, as the coefficient recursions for $L^+$ and $L^-$ become essentially mirror images.
    \item \textbf{Linear approximation validity.}  The intercept $a_N^\pm$ lies in the range $[0.69, 1.27]$, and the absolute slope $|b_N^\pm|$ lies in $[0.17, 0.26]$.  Since the market maker's optimal quotes typically fall within the first few LOB levels, the linear approximation $f_N^\pm(L) = a_N^\pm + b_N^\pm L$ yields values in $[0, 1]$ for the relevant range of $L$, satisfying the probabilistic interpretation of the demand function.  The standard deviations of the parameter estimates indicate moderate cross-interval variability, which is precisely the stochastic element that the framework of \cite{capponi2021market} and \cite{chavez2024adaptive} is designed to accommodate.
\end{enumerate}





\subsection{Block Probabilities from Calibrated sdHawkes}

The block probabilities
$p_{j}^{\delta} = P\bigl[N^{\delta}(t+\Delta t) - N^{\delta}(t) \in \mathrm{Block}(j)\bigr]$
introduced in Section~2.4 are the only sdHawkes-side inputs consumed by the
demand-function estimation of Section~5.2 and the mixed-moment recursions of
Section~5.4. Both the calibration of the sdHawkes parameters
$(\nu, \alpha, \beta, \phi)$ on tick-level MSFT LOBSTER data and the solution
of the PGF-PDE system for the unconditional per-side marginal are performed in
the companion paper~\cite{research3}, on the same five trading days
(November~3--7, 2014) used throughout this paper. 

The resulting per-side block probabilities at $\Delta t = 1$~s are reported in
Table~\ref{tab:block_probs_r3}. They are marginalised over the stationary
distribution of the queue-imbalance state and are cross-validated against
Ogata-thinning Monte Carlo simulations.

\begin{table}[htbp]
\centering
\caption{Per-side unconditional block probabilities from the calibrated
state-dependent Hawkes model at $\Delta t = 1$~s. Rows sum to unity within
rounding error. Source: companion paper~\cite{research3}.}
\label{tab:block_probs_r3}
\begin{tabular}{lrr}
\hline
$\mathrm{Block}(j)$ & $p_{j}^{+}$ (buy) & $p_{j}^{-}$ (sell) \\
\hline
$0$        & $0.4766$              & $0.5289$              \\
$1$--$10$  & $0.5100$              & $0.4600$              \\
$11$--$20$ & $0.01250$             & $0.01028$             \\
$21$--$30$ & $8.27\times 10^{-4}$  & $7.48\times 10^{-4}$  \\
$31$--$40$ & $7.13\times 10^{-5}$  & $7.32\times 10^{-5}$  \\
$41^{+}$   & $7.99\times 10^{-6}$  & $9.32\times 10^{-6}$  \\
\hline
\end{tabular}
\end{table}

Two features of Table~\ref{tab:block_probs_r3} shape the downstream DPP.
First, the zero-arrival mass is substantial on both sides
($p_{0}^{+} = 0.4766$, $p_{0}^{-} = 0.5289$), so that roughly half of the
one-second intervals contribute no market-order demand; the difference gives
single-side arrival rates $\pi^{+} = 1 - p_{0}^{+} \approx 0.523$ and
$\pi^{-} \approx 0.471$, a $\sim\!11\%$ asymmetry in favour of the buy side
over the sample period. Second, conditional on at least one arrival, the mass
concentrates almost entirely on the first non-zero block:
$p_{1}^{\delta} / (1 - p_{0}^{\delta}) \approx 0.975$ for both sides, with
only $\sim\!2.5\%$ of the conditional mass at $N \geq 11$.

\subsection{Dynamical Programming and Parameters Calibration}


With all model parameters estimated---the linear demand coefficients $a_N^\pm, b_N^\pm$, the sdHawkes parameters $\nu_e, \alpha_{e'xe}, \beta_{e'xe}$, and the closed-form block probabilities $P[N\in\mathrm{Block}(j)]$---we now assemble them into DPP to compute the optimal placement strategy. This section details the calibration pipeline: (i) assembly of per-timestep coefficients, (ii) computation of mixed moments from the block-probability distribution, (iii) backward recursion for the value-function parameters $\alpha_{t_k}, h_{t_k}, g_{t_k}$ and the optimal spread coefficients ${}^{(i)}\!A_{t_k}^{\pm}$, and (iv) forward simulation of the resulting quote and inventory dynamics under practical constraints.


\subsubsection{Calibration Inputs and Demand Moments}
\label{sec:calibration_inputs_moments}

The numerical strategy combines the block-specific demand estimates with the market-order block arrival probabilities.
For each side $\delta\in\{+,-\}$ and nonzero block $j$, let
\[
A_j^\delta=\mathrm{AVOS}\,a_j^\delta,
\qquad
B_j^\delta=\mathrm{AVOS}\,b_j^\delta,
\]
where $a_j^\delta$ and $b_j^\delta$ are the estimated coefficients
reported in Table~\ref{tab:linear_params}. These scaled coefficients
express demand in shares. In the forward evaluation, the realized
market-order count on each side selects the corresponding demand block
for the interval; intervals with no arrivals generate no demand on
that side.

The backward recursion uses demand moments obtained by averaging over
the nonzero blocks. With $p_j^\delta$ denoting the unconditional
block probabilities in Table~\ref{tab:block_probs_r3}, the weights
conditional on at least one arrival are
\begin{equation}
w_j^\delta=\frac{p_j^\delta}{1-p_0^\delta},
\qquad j=1,\ldots,5,\quad \delta\in\{+,-\}.
\label{eq:conditional_weight}
\end{equation}
The weights are computed separately for the two sides and held fixed
throughout the numerical experiment. The resulting demand moments are
\begin{equation}
\begin{aligned}
\bar{\mu}_A^\delta
  &=\sum_{j=1}^{5}w_j^\delta A_j^\delta,
&\qquad
\bar{\mu}_B^\delta
  &=\sum_{j=1}^{5}w_j^\delta B_j^\delta,\\[4pt]
\bar{\mu}_{A^2}^\delta
  &=\sum_{j=1}^{5}w_j^\delta(A_j^\delta)^2,
&
\bar{\mu}_{B^2}^\delta
  &=\sum_{j=1}^{5}w_j^\delta(B_j^\delta)^2,\\[4pt]
\bar{\mu}_{AB}^\delta
  &=\sum_{j=1}^{5}w_j^\delta A_j^\delta B_j^\delta.
\end{aligned}
\label{eq:per_side_moments}
\end{equation}

Since the supplied block probabilities are per-side marginals, they
do not determine the cross-side demand moments. We use the product
approximation
\begin{equation}
\begin{aligned}
\bar{\mu}_{B^+B^-}&=\bar{\mu}_B^+\bar{\mu}_B^-,\\
\bar{\mu}_{B^+A^-}&=\bar{\mu}_B^+\bar{\mu}_A^-,\\
\bar{\mu}_{A^+B^-}&=\bar{\mu}_A^+\bar{\mu}_B^-.
\end{aligned}
\label{eq:cross_moments}
\end{equation}
These factorizations require independence under the demand
distribution being averaged and are an additional modeling assumption,
not a consequence of the Hawkes specification. The calibrated moments
are reported in Table~\ref{tab:mixed_moments_r3}.

\begin{table}[htbp]
\centering
\caption{Calibrated demand moments used in the backward recursion.}
\label{tab:mixed_moments_r3}
\small
\begin{tabular}{lrr}
\toprule
Moment & Buy ($\delta = +$) & Sell ($\delta = -$) \\
\midrule
$\bar{\mu}_{A}^{\delta}$    & $145.22$        & $143.73$        \\
$\bar{\mu}_{B}^{\delta}$    & $-35.85$        & $-35.43$        \\
$\bar{\mu}_{A^{2}}^{\delta}$ & $21{,}157.31$  & $20{,}743.42$   \\
$\bar{\mu}_{B^{2}}^{\delta}$ & $1{,}288.11$   & $1{,}259.39$    \\
$\bar{\mu}_{AB}^{\delta}$    & $-5{,}220.28$  & $-5{,}110.93$   \\
\midrule
\multicolumn{3}{l}{\emph{Cross moments:}} \\
$\bar{\mu}_{B^{+} B^{-}}$ & \multicolumn{2}{c}{$1{,}270.17$}  \\
$\bar{\mu}_{B^{+} A^{-}}$ & \multicolumn{2}{c}{$-5{,}152.16$} \\
$\bar{\mu}_{A^{+} B^{-}}$ & \multicolumn{2}{c}{$-5{,}145.69$} \\
\bottomrule
\end{tabular}
\end{table}

The conditional weight on the first nonzero block is approximately
$97.4\%$ on the buy side and $97.6\%$ on the sell side. Thus, most
of the weight in the demand calibration is assigned to intervals
containing between one and ten market orders. The remaining blocks
retain their contributions through \eqref{eq:per_side_moments}.

In addition to these fixed demand moments, the recursion uses the
time-varying arrival probabilities $\pi_{t_{k+1}}^\pm$ and
$\pi_{t_{k+1}}(1,1)$. In the numerical implementation, these are
estimated from recent arrival activity, represented by
$(g_0,g_1,g_2)$, where
\[
g_m=
\mathbf{1}_{\{N_{\mathrm{buy}}(t_{k-m})>0\}}
+\mathbf{1}_{\{N_{\mathrm{sell}}(t_{k-m})>0\}},
\qquad m=0,1,2.
\]
Here $N_{\mathrm{buy}}(t_{k-m})$ and
$N_{\mathrm{sell}}(t_{k-m})$ denote the interval counts used to
construct the arrival-history categories. For an admissible quote
decision, these counts must be observed by $t_k$. The conditional
arrival probabilities are separate inputs from the fixed marginal
weights in \eqref{eq:conditional_weight}; in particular,
$\pi_{t_{k+1}}(1,1)$ is supplied by the joint arrival-frequency
estimate rather than inferred from the per-side block probabilities.
At the first two decision times of each trading day, the required
history is unavailable, and all three arrival probabilities are set
to zero in the backward recursion. This initialization convention is
applied consistently across the sample.


\subsubsection{Backward Dynamic Programming Recursion}
\label{sec:backward_calibration}

Given the calibrated demand moments and arrival probabilities, we
evaluate the backward recursions from the terminal
decision interval to the beginning of the trading horizon. The
terminal inventory penalty is set to $\lambda=1.0$, with
\begin{equation}
\alpha_T=-\lambda=-1.0,
\qquad h_T=0,
\qquad g_T=0.
\label{eq:calibration_terminal_conditions}
\end{equation}
At each decision time, the recursion determines the value-function
coefficients $\alpha_{t_k}$, $h_{t_k}$, and $g_{t_k}$ together with
the coefficients ${}^{(i)}\!A_{t_k}^{\pm}$, $i=1,2,3$, of the
inventory-dependent quotes. The resulting quote coefficients are
then used in the forward evaluation of the strategy.

The numerical calculation includes a prescribed adjustment of the
denominator $\gamma_{t_k}$. If its computed value is nonnegative,
it is replaced by $-|\gamma_{t_k}|-10^{-12}$ before evaluating
the quote coefficients. This adjustment specifies how the
implementation handles such cases; it does not, by itself,
establish the concavity conditions required by the theoretical
optimization problem. The initialization convention sets
$\pi_{t_{k+1}}^{+}=\pi_{t_{k+1}}^{-}
=\pi_{t_{k+1}}(1,1)=0$ at the first two decision times of each
trading day, when the required arrival history is unavailable.
At these steps, the algebraic prescription returns zero quote
coefficients. These values reflect the initialization convention,
rather than an economically identified optimal spread.

The initial value-function coefficients are
\[
\alpha_0\approx-1.0\times10^{-6},
\qquad h_0\approx0.1041,
\qquad g_0\approx3.210\times10^{6},
\]
consistent with the calibration summary in
Table~\ref{tab:calibration_summary}. These are coefficients of the
candidate value function, rather than realized trading outcomes.
In particular, $g_0$ is the candidate value at zero initial cash
and inventory; realized terminal wealth is evaluated separately
in Section~5.5.

For the same calibration, the approximately constant interior values
of the baseline quote terms are
\[
{}^{(3)}\!A_{t_k}^{+}\approx2.026,
\qquad
{}^{(3)}\!A_{t_k}^{-}\approx2.028,
\]
in tick units, as reported in Section~5.5.1. These terms describe
one component of the unconstrained quotes. Actual quote placement
also depends on the inventory and linear value-function terms,
followed by the inventory adjustments and discretization described
in the next subsection. The resulting quote distribution and
inventory performance are examined in Section~5.5.


\subsubsection{Quote Implementation and Inventory Control}
\label{sec:quote_implementation}

The forward experiment evaluates the inventory-dependent quotes
derived in Section~3.4 using the calibrated coefficients from the
backward recursion. Denote these unconstrained offsets by
$L_{t_k}^{\pm,\mathrm{raw}}$. Before applying the demand model, we
adjust the offsets for inventory exposure and restrict their values
to the five placement levels considered in the experiment. These
adjustments define the implemented strategy; its optimality does
not follow directly from the verification result for the original
control problem.

Inventory exposure is moderated by a soft threshold
$I_{\mathrm{lim}}=5{,}000$ shares. Define the relative excess by
\[
e_{t_k}=\max\!\left\{
\frac{|I_{t_k}|-I_{\mathrm{lim}}}{I_{\mathrm{lim}}},0
\right\},
\qquad \kappa=3.
\]
When inventory exceeds the positive threshold, the ask offset is
compressed and the bid offset is widened; the roles are reversed
when inventory falls below the negative threshold. Specifically,
\begin{equation}
\begin{aligned}
L_{t_k}^{+,\mathrm{soft}}
&=
\begin{cases}
\max\{L_{t_k}^{+,\mathrm{raw}}e^{-\kappa e_{t_k}},0\},
  & I_{t_k}>I_{\mathrm{lim}},\\
L_{t_k}^{+,\mathrm{raw}}(1+\kappa e_{t_k}),
  & I_{t_k}<-I_{\mathrm{lim}},\\
L_{t_k}^{+,\mathrm{raw}},
  & |I_{t_k}|\leq I_{\mathrm{lim}},
\end{cases}
\\[6pt]
L_{t_k}^{-,\mathrm{soft}}
&=
\begin{cases}
L_{t_k}^{-,\mathrm{raw}}(1+\kappa e_{t_k}),
  & I_{t_k}>I_{\mathrm{lim}},\\
\max\{L_{t_k}^{-,\mathrm{raw}}e^{-\kappa e_{t_k}},0\},
  & I_{t_k}<-I_{\mathrm{lim}},\\
L_{t_k}^{-,\mathrm{raw}},
  & |I_{t_k}|\leq I_{\mathrm{lim}}.
\end{cases}
\end{aligned}
\label{eq:soft_inventory_adjustment}
\end{equation}
For nonnegative raw offsets, this adjustment makes the
inventory-reducing side more aggressive and the opposite side less
aggressive. It is a feedback rule rather than a hard inventory
constraint and does not ensure that
$|I_{t_k}|\leq I_{\mathrm{lim}}$.

The adjusted offsets are rounded to the nearest integer and
restricted to the admissible placement levels:
\begin{equation}
L_{t_k}^{\pm,\mathrm{final}}
=\min\!\left\{5,\max\!\left\{1,
\operatorname{round}(L_{t_k}^{\pm,\mathrm{soft}})
\right\}\right\}.
\label{eq:implemented_quote_levels}
\end{equation}
Thus $L_{t_k}^{\pm,\mathrm{final}}\in\{1,2,3,4,5\}$,
matching the level range used to estimate the demand functions.

For each interval, the realized order-count block supplies the
coefficients used to evaluate linear demand at the selected level:
\begin{equation}
Q_{t_{k+1}}^\pm
=A_{t_{k+1}}^\pm
+B_{t_{k+1}}^\pm L_{t_k}^{\pm,\mathrm{final}}.
\label{eq:implemented_linear_demand}
\end{equation}
Cash and inventory are then updated according to the dynamics
in Section~3.2. The quote-level restriction in
\eqref{eq:implemented_quote_levels} does not itself impose the
demand admissibility condition of Section~2.3. In particular,
interpreting \eqref{eq:implemented_linear_demand} as a filled
volume requires its value to lie between zero and
$\mathrm{AVOS}$. The specification above retains the linear
demand calculation without an additional truncation of volumes.

Section~5.5 evaluates the resulting strategy through its quote
distribution, inventory exposure, and terminal wealth relative
to the fixed-spread benchmarks.

\subsection{Optimal Placement Strategy}

%

With the backward DPP coefficients ${}^{(i)}\!A_{t_k}^{\pm}$ fully calibrated, we
now evaluate the resulting optimal placement strategy across all five trading days.
This section presents the complete strategy evaluation: the distribution of posted quote
levels, the intraday quote and inventory dynamics on a representative day, and a
performance comparison against fixed-spread benchmark strategies.


\subsubsection{Optimal Quote Level Distribution}

Figure~\ref{fig:level_distribution} and Table~\ref{tab:level_distribution}
summarise the implemented quote levels over the $117{,}000$ one-second
intervals in the five-day sample, after the inventory adjustment and
rounding and clipping described above. Level~2 accounts for $71.52\%$ of
ask quotes and $77.33\%$ of bid quotes, while levels~4 and~5 together
account for less than $1\%$ on either side. This concentration near
level~2 is consistent with the baseline quote coefficients
${}^{(3)}\!A_{t_k}^{+} \approx 2.026$ and
${}^{(3)}\!A_{t_k}^{-} \approx 2.028$ reported above, although the
implemented quotes also depend on inventory and the subsequent quote
adjustments.

\begin{figure}[htbp]
\centering
\includegraphics[width=\textwidth]{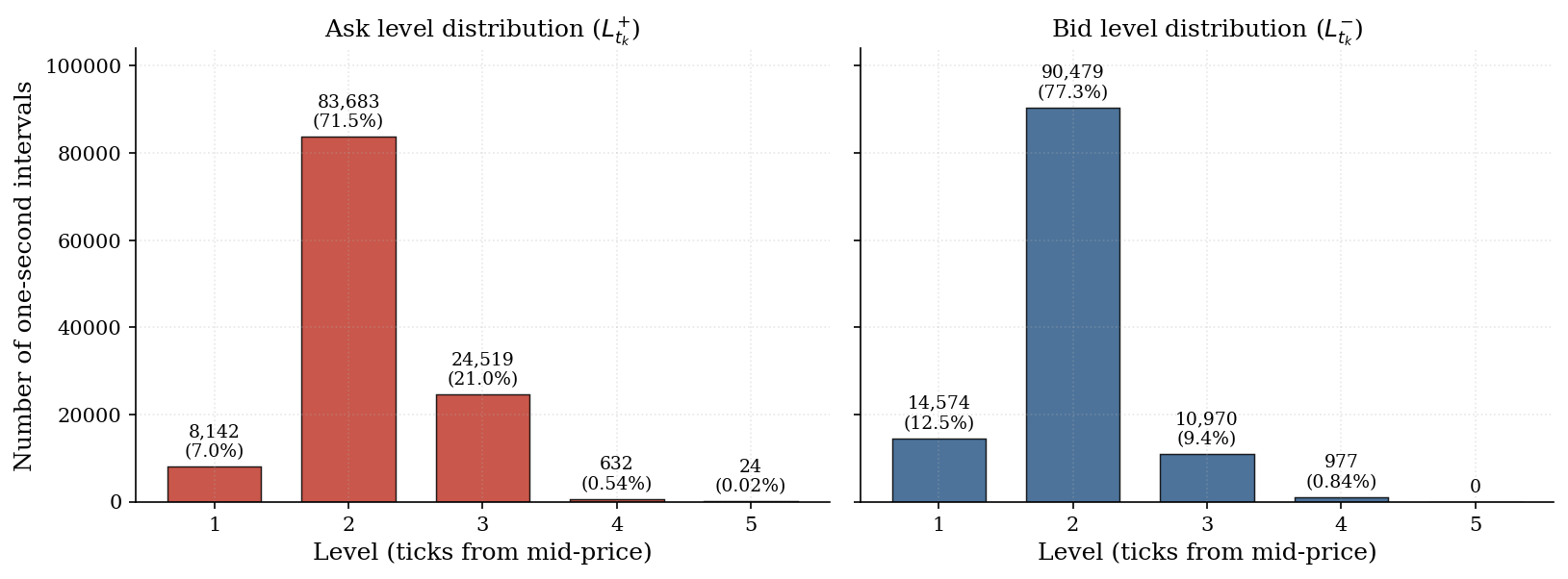}
\caption{Empirical distributions of implemented ask (left) and bid
(right) quote levels over the five-day sample. Quote levels are measured
in ticks from the mid-price.}
\label{fig:level_distribution}
\end{figure}

\begin{table}[htbp]
\centering
\caption{Frequencies of implemented ask and bid quote levels over the five-day sample. Each level is measured in ticks from the mid-price.}\label{tab:level_distribution}
\small
\begin{tabular}{ccccc}
\toprule
Level & Ask steps & Ask (\%) & Bid steps & Bid (\%) \\
\midrule
1 & \phantom{0}8,142  & \phantom{0}6.96 & 14,574 & 12.46 \\
2 & 83,683            & 71.52           & 90,479 & 77.33 \\
3 & 24,519            & 20.96           & 10,970 & \phantom{0}9.38 \\
4 & \phantom{00,0}632 & \phantom{0}0.54 & \phantom{00,0}977 & \phantom{0}0.84 \\
5 & \phantom{000,0}24 & \phantom{0}0.02 & \phantom{000,00}0 & \phantom{0}0.00 \\
\midrule
Total & 117,000 & 100.00 & 117,000 & 100.00 \\
\bottomrule
\end{tabular}
\end{table}

The distributions also exhibit a directional asymmetry: ask quotes occur
more frequently at level~3 than bid quotes ($20.96\%$ versus $9.38\%$),
whereas bid quotes occur more frequently at level~1 ($12.46\%$ versus
$6.96\%$). Thus, relative to the bid distribution, the ask distribution
places more mass at larger offsets. This pattern is consistent with
inventory management in a sample with a negative mean inventory of
approximately $-2{,}470$ shares: wider ask offsets and narrower bid
offsets tend to favour inventory replenishment. The aggregate
frequencies alone, however, do not isolate the contributions of order-flow
asymmetry, inventory feedback, and the soft inventory adjustment.

Figure~\ref{fig:inventory_diagnostic}, discussed in the following
subsection, provides the corresponding inventory evidence. The
intraday analysis there complements the pooled distributions by
examining the evolution of quotes and inventory within a trading day.


\subsubsection{Inventory Exposure and Intraday Quote Dynamics}
\label{sec:inventory_quote_diagnostics}

We evaluate the implemented strategy by examining inventory exposure
across the five trading days and quote behavior on the day with the
most frequent threshold exceedances. Table~\ref{tab:calibration_summary}
collects the calibration settings and aggregate inventory statistics.
Inventory remains within the $\pm5{,}000$-share threshold for
$47.09\%$ of the sample, while the soft adjustment is active for
the remaining $52.91\%$. The maximum absolute inventory is
$6{,}311.2$ shares, approximately $26.2\%$ above the threshold.
These results confirm that the threshold acts as a soft adjustment
trigger rather than a binding inventory limit. The observed maximum
is a sample statistic, not a guaranteed bound on future exposure.

\begin{table}[htbp]
\centering
\caption{Calibration settings and aggregate inventory statistics for the five-day sample.}
\label{tab:calibration_summary}
\small
\begin{tabular}{lr}
\toprule
Metric & Value \\
\midrule
Total timesteps ($T$)                & 117,000 \\
Terminal penalty $\lambda$           & 1.0 \\
Soft barrier limit $I_{\text{lim}}$  & 5,000 shares \\
Barrier sharpness $\kappa$           & 3.0 \\
Integer levels allowed               & $\{1,2,3,4,5\}$ \\
\midrule
$\alpha_{0}$ (initial)               & $-1.0 \times 10^{-6}$ \\
$h_{0}$ (initial)                    & $0.1041$ \\
$g_{0}$ (initial)                    & $3.210 \times 10^{6}$ \\
\midrule
Soft barrier activations             & 61,908 / 117,000 (52.91\%) \\
Steps within $\pm 5{,}000$ inventory & 55,092 / 117,000 (47.09\%) \\
Max $|I_{t_{k}}|$                    & 6,311.2 shares \\
Mean $I_{t_{k}}$                     & $-2{,}469.5$ shares \\
Std $I_{t_{k}}$                      & 4,039.6 shares \\
\bottomrule
\end{tabular}
\end{table}

The mean inventory is $-2{,}469.5$ shares, with a standard deviation
of $4{,}039.6$ shares. The aggregate mean masks substantial variation
across sessions. Table~\ref{tab:per_day_inventory} reports the daily
statistics, and Figure~\ref{fig:inventory_diagnostic} displays the
inventory path and the corresponding adjustment indicator. November~5
has the largest within-threshold proportion ($67.4\%$), whereas
November~7 has the smallest ($34.2\%$). Thus, inventory control
remains an important limitation of the implemented strategy even
with the additional feedback rule.

\begin{table}[htbp]
\centering
\caption{Daily inventory statistics for the implemented strategy. The threshold is $5{,}000$ shares in absolute value.}
\label{tab:per_day_inventory}
\small
\begin{tabular}{cccc}
\toprule
Date & $\max|I_{t_{k}}|$ & Within-limit \% & Barrier steps \\
\midrule
2014-11-03 & 6,217 & 47.8\% & 12,204 \\
2014-11-04 & 6,188 & 37.8\% & 14,564 \\
2014-11-05 & 5,724 & 67.4\% & \phantom{0}7,634 \\
2014-11-06 & 6,251 & 48.2\% & 12,112 \\
2014-11-07 & 6,311 & 34.2\% & 15,394 \\
\midrule
Total      & 6,311 & 47.09\% & 61,908 \\
\bottomrule
\end{tabular}
\end{table}

\begin{figure}[htbp]
\centering
\includegraphics[width=\textwidth]{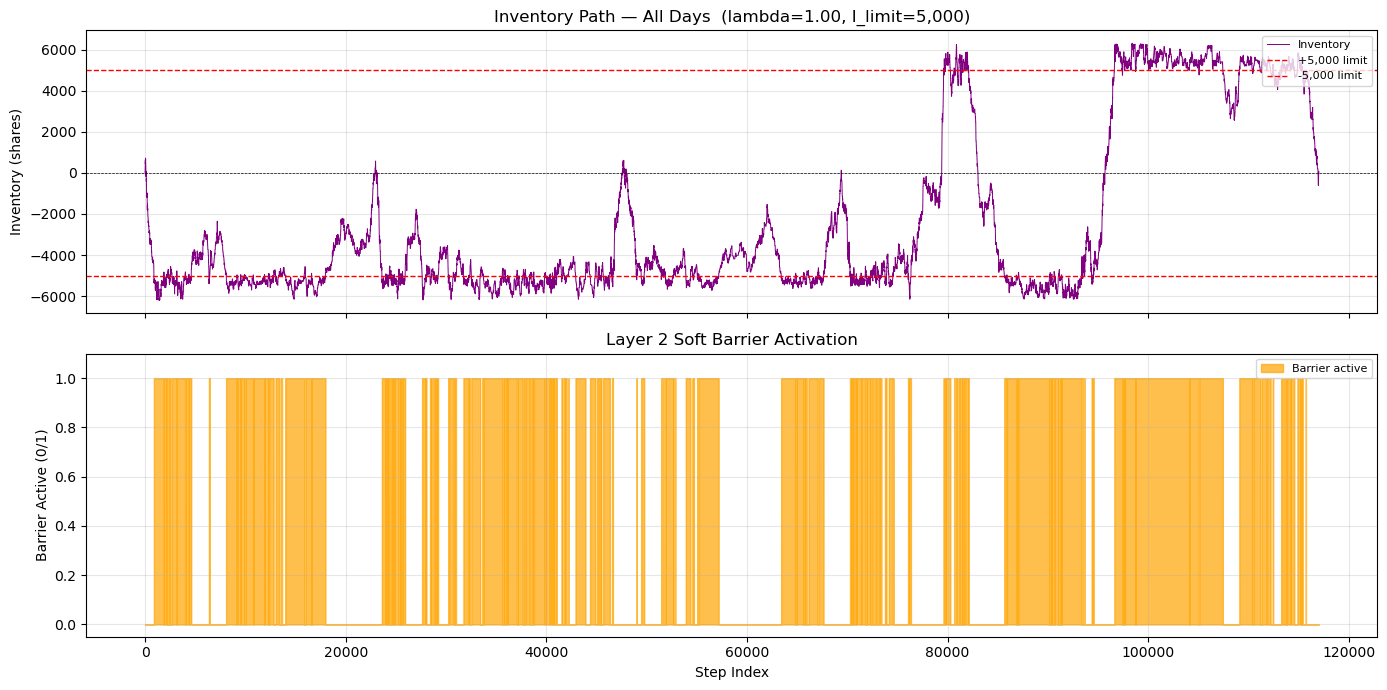}
\caption{Inventory exposure over the five-day sample. Top: inventory and the $\pm5{,}000$-share thresholds. Bottom: the soft-adjustment activation indicator.}
\label{fig:inventory_diagnostic}
\end{figure}

November~7 provides a useful case study of the strategy under sustained
inventory pressure. On that day, the adjustment is active in
$15{,}394$ of $23{,}400$ intervals, and the maximum absolute inventory
is approximately $6{,}311$ shares. The terminal position before
liquidation is $4{,}193$ shares. Figure~\ref{fig:forward_simulation}
shows the joint evolution of quotes, inventory, and cash, while
Table~\ref{tab:trading_summary_1107} collects the single-day results.

\begin{figure}[htbp]
\centering
\includegraphics[width=\textwidth]{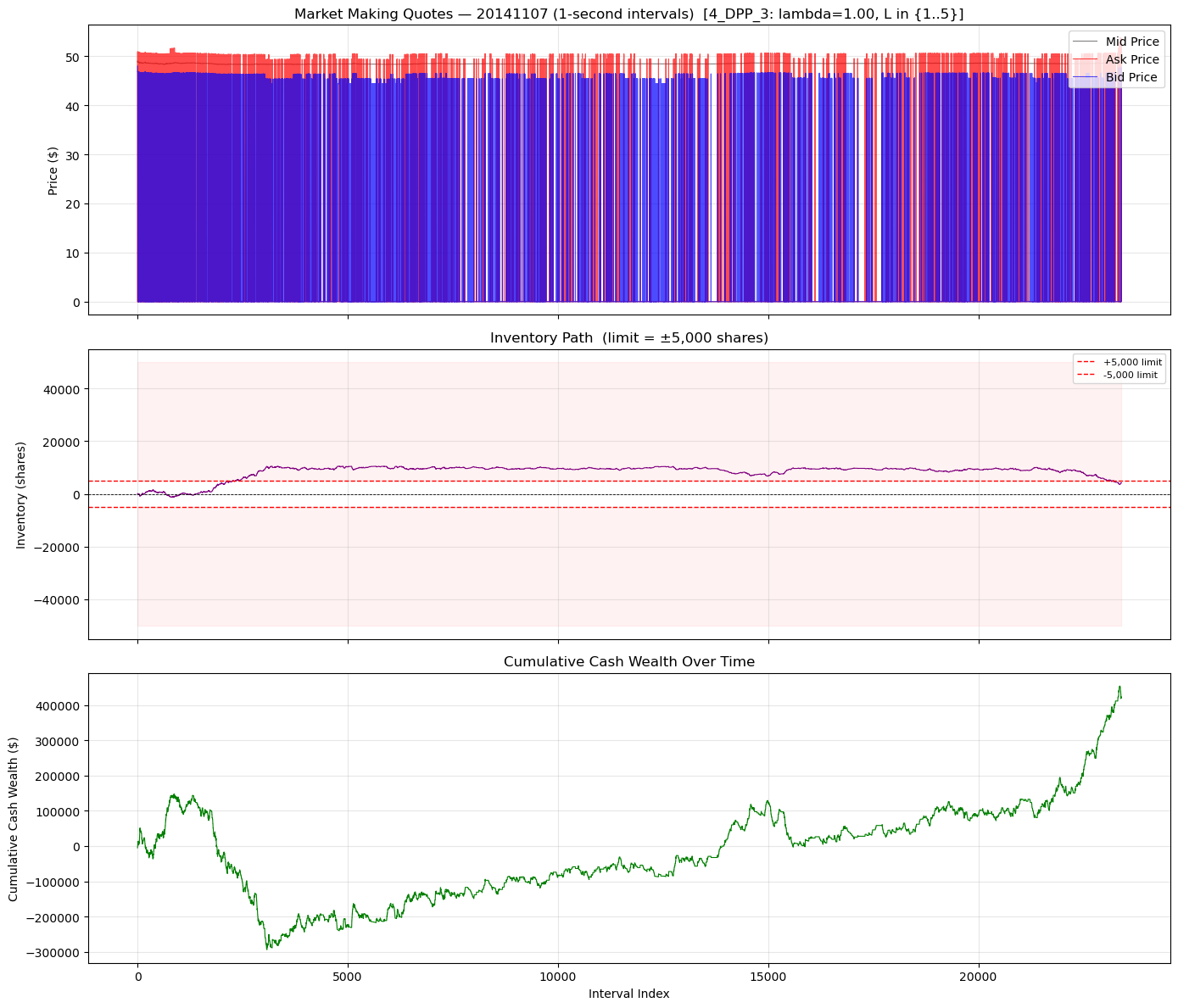}
\caption{Implemented strategy on November 7, 2014. Top: ask and bid quotes and the mid-price. Middle: inventory and the $\pm5{,}000$-share thresholds. Bottom: cumulative cash before terminal liquidation.}
\label{fig:forward_simulation}
\end{figure}

\begin{table}[htbp]
\centering
\caption{Single-day results and quote-level counts for November 7, 2014. Terminal wealth includes liquidation at the closing mid-price.}
\label{tab:trading_summary_1107}
\small
\begin{tabular}{lr}
\toprule
Metric & Value \\
\midrule
Total intervals simulated                          & 23,400 \\
MSFT closing mid-price                             & \$48.685 \\
Terminal inventory (before liquidation)            & $+4{,}193$ shares \\
Terminal inventory (after liquidation)             & 0 \\
Terminal cash wealth $W_{T}$ (incl.\ liquidation)  & \$627,259 \\
\midrule
Max $|I_{t_{k}}|$ on this day                       & 6,311 shares \\
Steps within $\pm 5{,}000$ inventory                & \phantom{00,}8,006 / 23,400 (34.2\%) \\
Soft-barrier activations                           & 15,394 / 23,400 (65.8\%) \\
\midrule
\multicolumn{2}{l}{\emph{Ask level counts on this day:}} \\
$L^{+}=1$ / $2$ / $3$ / $4$ / $5$
    & 7{,}708 / 15{,}580 / 69 / 19 / 24 \\
\multicolumn{2}{l}{\emph{Bid level counts on this day:}} \\
$L^{-}=1$ / $2$ / $3$ / $4$ / $5$
    & 52 / 12{,}154 / 10{,}227 / 967 / 0 \\
\bottomrule
\end{tabular}
\end{table}

The quote-level distribution on this long-inventory day differs from
the five-day aggregate. Ask quotes are concentrated at levels~1 and~2,
whereas bid quotes are concentrated at levels~2 and~3. This direction
is consistent with more aggressive selling and less aggressive buying
when inventory is positive. Figure~\ref{fig:quote_levels_1107}
compares the adjusted continuous offsets with the final integer levels,
illustrating how discretization translates the inventory response
into changes in posted levels. The aggregate distribution is reported
in Table~\ref{tab:level_distribution}.

\begin{figure}[htbp]
\centering
\includegraphics[width=\textwidth]{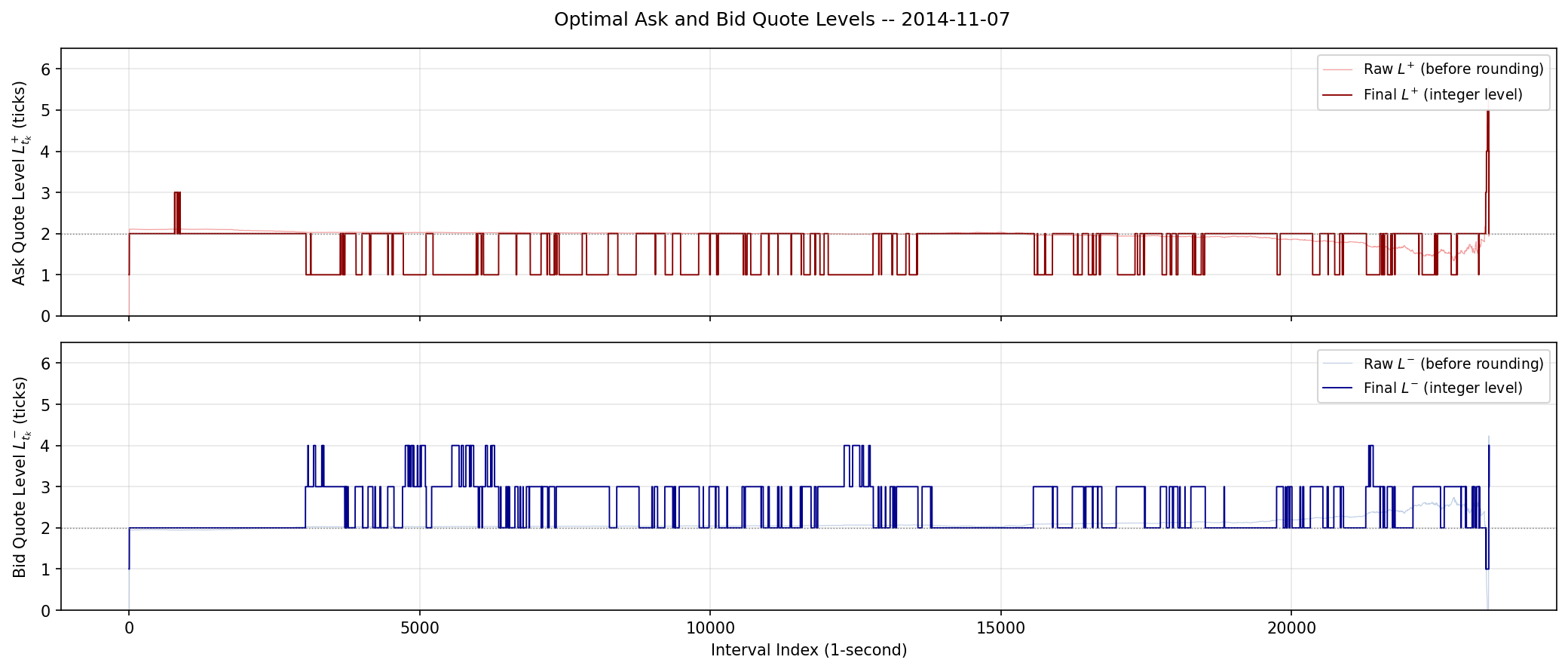}
\caption{Ask (top) and bid (bottom) placement levels on November 7, 2014. Light curves show the adjusted continuous offsets $L_{t_k}^{\pm,\mathrm{soft}}$; dark curves show the final integer levels. The dotted reference line is at level~2.}
\label{fig:quote_levels_1107}
\end{figure}

The terminal wealth reported for November~7 is $\$627{,}259$ after
liquidation at the closing mid-price. The cash trajectory in
Figure~\ref{fig:forward_simulation} excludes the contemporaneous value
of inventory and should therefore be distinguished from
marked-to-market wealth. Multi-day wealth and the comparison with
fixed-spread strategies are examined in the following subsections.


\subsubsection{Multi-Day Performance}

Table~\ref{tab:multiday_performance} reports the per-day terminal wealth
$W_{T}$ (including inventory liquidation at the closing mid-price) across
all five trading days in November~2014.
Figure~\ref{fig:cumulative_pnl} plots the pre-liquidation cumulative cash
wealth trajectory on each day.

\begin{table}[htbp]
\centering
\caption{Per-day terminal wealth for the optimal DPP strategy
($\lambda = 1.0$, $L_{t_{k}}^{\pm} \in \{1,\dots,5\}$,
$I_{\text{lim}} = 5{,}000$~shares).}
\label{tab:multiday_performance}
\small
\begin{tabular}{lccc}
\toprule
Date & Closing Mid-Price & $W_{T}$ (incl.\ liquidation) & Day's Return \\
\midrule
2014-11-03 & \$47.445 & \$\phantom{0}563{,}110 & Baseline \\
2014-11-04 & \$47.565 & \$\phantom{0}597{,}349 & $+$\,6.1\% \\
2014-11-05 & \$47.855 & \$\phantom{0}600{,}556 & $+$\,6.6\% \\
2014-11-06 & \$48.685 & \$\phantom{0}776{,}807 & $+$37.9\% \\
2014-11-07 & \$48.685 & \$\phantom{0}627{,}259 & $+$11.4\% \\
\midrule
\textbf{Total (5 days)} & --- & \textbf{\$3{,}165{,}081} & --- \\
\bottomrule
\end{tabular}
\end{table}

\begin{figure}[htbp]
\centering
\includegraphics[width=\textwidth]{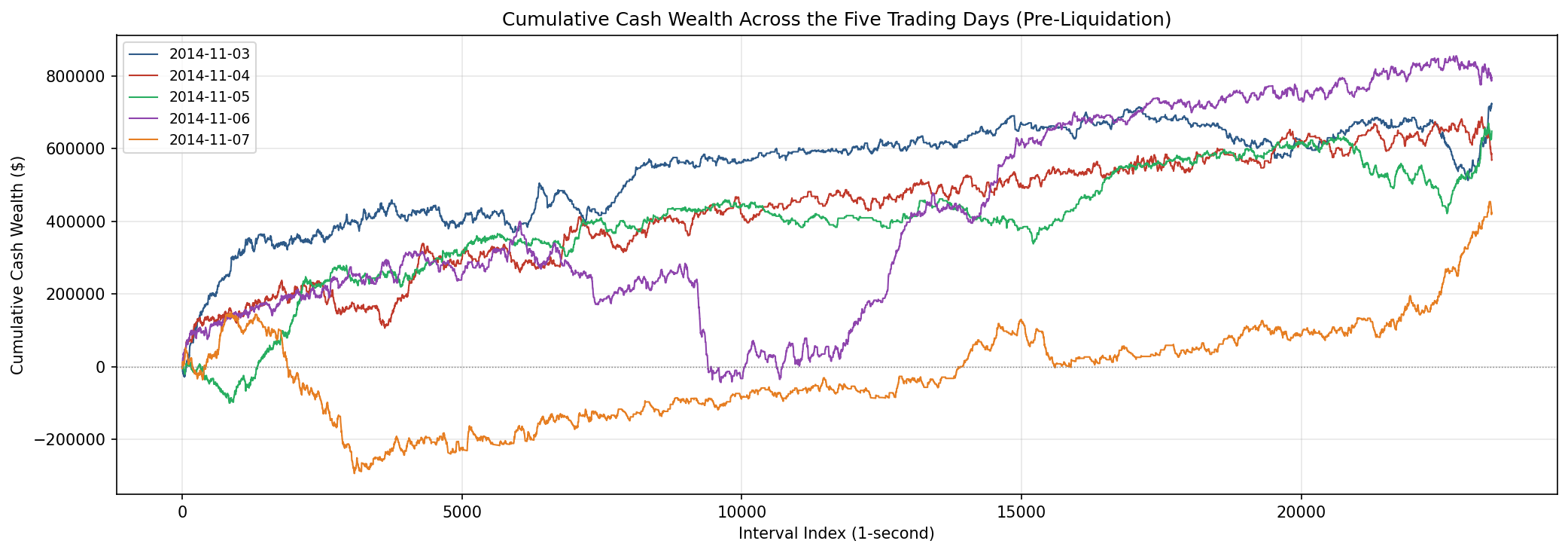}
\caption{Pre-liquidation cumulative cash wealth on each of the five
trading days. Each curve starts from zero and ends at the pre-liquidation
wealth level; the terminal-inventory liquidation contributes the unshown
increment to $W_{T}$ reported in Table~\ref{tab:multiday_performance}.
November~6 exhibits the highest intraday P\&L accumulation, consistent
with its higher market-order volume ($33{,}574$ total orders vs.\
$22{,}566$--$26{,}977$ for the other days).}
\label{fig:cumulative_pnl}
\end{figure}

\noindent
The strategy generates positive terminal wealth on every day, ranging
from \$563{,}110 (November~3) to \$776{,}807 (November~6), for an
aggregate of \$3.17~million over the five-day window. November~6 stands
out as the most profitable day: its unusually high market-order arrival
rate ($33{,}574$ total orders, compared to $22{,}566$--$26{,}977$ on the
other days) translates into higher trade volumes at the posted quotes,
amplifying the per-interval spread capture. The relatively stable P\&L
on the remaining four days reflects the consistency of the
demand-function estimates and the sdHawkes arrival probabilities across
typical market conditions, even as the intraday inventory path itself
varies substantially between short-biased sessions (November~3--5) and
long-biased sessions (November~6--7).


\subsubsection{Comparison with Fixed-Spread Benchmarks}

We compare the implemented DPP strategy with three benchmarks that post
constant ask and bid offsets of $L\in\{1,2,3\}$ ticks. All strategies
are evaluated using the same observed market-order arrival sequence and
the same demand-function coefficients $A_j^{\pm}$ and $B_j^{\pm}$.
The comparison therefore measures the realised performance of adaptive
and fixed quotes under a common demand specification over the five-day
sample.

Table~\ref{tab:benchmark_comparison} reports aggregate terminal wealth
after liquidating each day's remaining inventory at the closing
mid-price. These amounts include the terminal inventory valuation but
exclude the quadratic inventory penalty in the optimisation objective.
The percentage differences use the DPP strategy's aggregate wealth as
the denominator. Figure~\ref{fig:wealth_comparison} presents the
corresponding daily results.

\begin{table}[htbp]
\centering
\caption{Aggregate terminal wealth after daily liquidation at the closing
mid-price. Percentage differences are relative to the implemented DPP
strategy.}
\label{tab:benchmark_comparison}
\small
\begin{tabular}{lcccc}
\toprule
Strategy & Ask level & Bid level & Terminal wealth & Difference (\%) \\
\midrule
DPP strategy & Adaptive & Adaptive & \$3{,}165{,}081 & --- \\
Fixed $L=1$ & 1 & 1 & \$2{,}549{,}852 & $-19.4$ \\
Fixed $L=2$ & 2 & 2 & \$3{,}443{,}368 & $+8.8$ \\
Fixed $L=3$ & 3 & 3 & \$2{,}695{,}842 & $-14.8$ \\
\bottomrule
\end{tabular}
\end{table}

\begin{figure}[htbp]
\centering
\includegraphics[width=\textwidth]{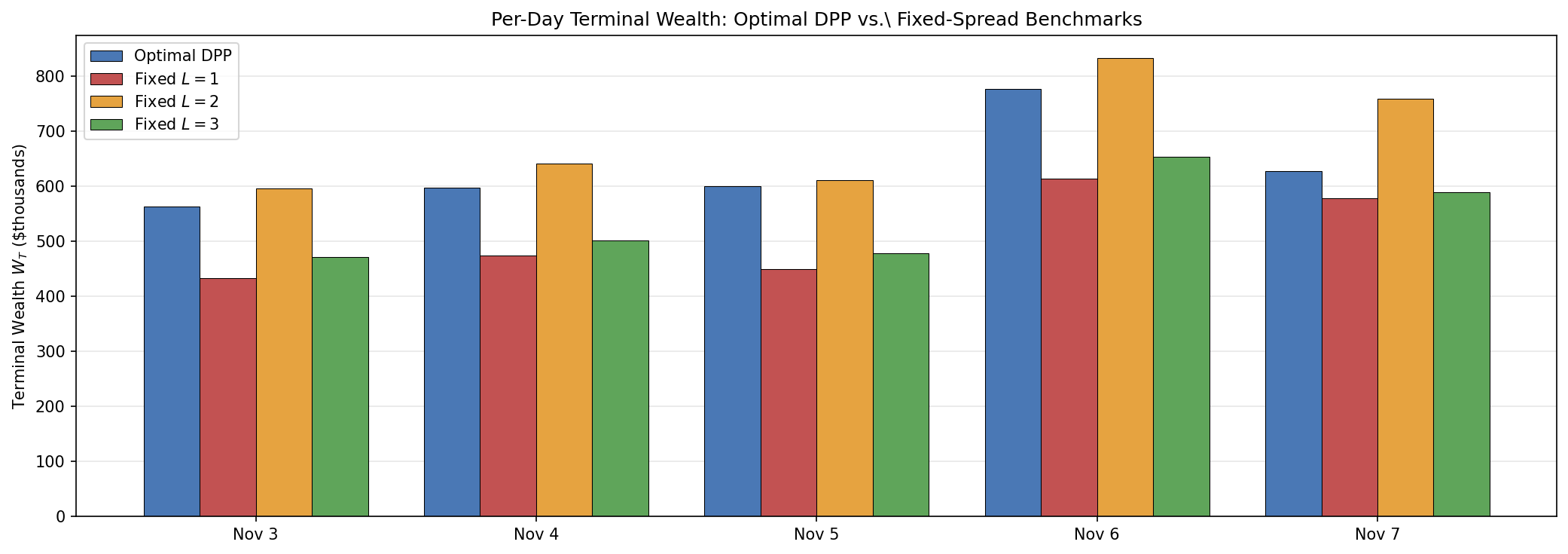}
\caption{Daily terminal wealth after liquidation at the closing
mid-price for the implemented DPP strategy and the three fixed-spread
benchmarks.}
\label{fig:wealth_comparison}
\end{figure}

The DPP strategy produces greater aggregate terminal wealth than the
fixed-$L=1$ and fixed-$L=3$ benchmarks, but falls below fixed $L=2$ by
\$278{,}287. Thus, among the strategies considered, fixed $L=2$ achieves
the highest aggregate wealth in this sample. Its performance is
consistent with the spread--volume trade-off represented by the demand
model: a larger offset increases the spread earned per executed share
but reduces model-implied demand when $B_j^{\pm}<0$. The concentration
of DPP quotes at level~2 in Table~\ref{tab:level_distribution} places
the adaptive strategy near this benchmark for much of the sample,
while its inventory-dependent deviations yield different trading and
inventory paths.

This wealth ranking must be distinguished from a comparison of the
control objective $\mathbb{E}[W_T+S_T I_T-\lambda I_T^2]$.
To make the accounting explicit, let $W_T^{\mathrm{pre}}$ and
$I_T^{\mathrm{pre}}$ denote cash and inventory immediately before
terminal liquidation. Liquidation at the closing mid-price yields
$W_T^{\mathrm{pre}}+S_T I_T^{\mathrm{pre}}$, whereas evaluation of the
penalised terminal criterion additionally subtracts
$\lambda (I_T^{\mathrm{pre}})^2$. The penalty must therefore be
computed using inventory before liquidation; using the zero inventory
after liquidation would remove the intended penalty. The wealth totals
in Table~\ref{tab:benchmark_comparison} alone do not establish which
strategy achieves the higher penalised payoff. Such a comparison
requires the corresponding terminal inventories for every strategy,
with intraday inventory exposure reported separately.

The verification result in Section~3 concerns the controls and
admissible class specified by the theoretical model. The numerical
strategy additionally applies a soft inventory adjustment, rounding,
and clipping, so its optimality does not follow directly from that
result. Accordingly, the present comparison establishes the observed
wealth ranking under the stated simulation specification; it does not
establish superiority in the penalised objective or a reduction in
inventory risk relative to the fixed-spread benchmarks.

\section{Conclusions and future work}


The numerical study uses five trading days of MSFT LOBSTER data from
November~3--7, 2014. The implemented strategy posts predominantly at the
second quote level and exhibits asymmetric ask and bid distributions
consistent with its inventory-dependent adjustments. It produces
positive terminal wealth after liquidation at the closing mid-price on
each day, with an aggregate of approximately \$3.17~million. Inventory
remains within the $5{,}000$-share soft threshold in absolute value for
$47.09\%$ of the sampled intervals, with a maximum absolute position of
$6{,}311.2$ shares. These observations describe the realised inventory
exposure; the soft adjustment does not enforce a hard inventory bound.
In the benchmark comparison, aggregate terminal wealth exceeds that of
the fixed-$L=1$ and fixed-$L=3$ strategies but is lower than that of
fixed $L=2$, whose aggregate wealth is $8.8\%$ higher. Because these
wealth totals exclude the terminal inventory penalty, they do not
establish a ranking under the penalised control objective.

The numerical findings are subject to several limitations. The sample
covers a single stock over five trading days and therefore provides
limited evidence about performance across market conditions. The linear
demand specification is analytically convenient, but its interpretation
as an execution probability requires an appropriate range of quote
levels and coefficients. In addition, the soft inventory adjustment and
discretisation of quotes modify the controls derived in the theoretical
model, so the verification result does not directly establish optimality
of the implemented strategy. The reported results should consequently
be interpreted as a sample-based evaluation under the specified demand
and liquidation assumptions.

Further work should extend the empirical evaluation across securities
and trading periods and compare the strategies using both penalised
terminal payoffs and intraday inventory exposure. Such comparisons
should use a common accounting convention, with the terminal inventory
penalty evaluated before liquidation. A complementary theoretical
direction is to incorporate discrete quote levels and inventory
restrictions directly into the control problem. Finally, bounded or
nonlinear demand specifications could improve the representation of
execution at larger offsets; their empirical fit and implications for
the tractability of the backward recursion remain to be investigated.

\newpage




\bibliographystyle{apalike}
\bibliography{bibliography}

\newpage





\typeout{=== BIBTEX DEBUG ===}
\nocite{*} 
\typeout{=== END DEBUG ===}
\end{document}